\colorlet{darkgreen}{green!50!black}
\tikzset{
    >=latex,
    punkt/.style={
           rectangle,
           rounded corners,
           draw=black, very thick,
           text width=6.5em,
           minimum height=2em,
           text centered},
    pil/.style={
           ->,
           double,
           thick,
           shorten <=2pt,
           shorten >=2pt,},
    punkti/.style={
           rectangle,
           rounded corners,
           draw=black, very thick,
           text width=26.5em,
           minimum height=2em,
           text centered},
    punktii/.style={
           rectangle,
           rounded corners,
           draw=black, very thick,
           text width=18.5em,
           minimum height=2em,
           text centered}
}
\theoremstyle{plain}
\newtheorem{theorem}{\protect\theoremname}
\newtheorem*{theorem*}{\protect\theoremname}
\newtheorem*{proposition*}{\protect\theoremname}
\newtheorem{definition}{\protect\definitionname}
\theoremstyle{definition}
\newtheorem{example}[definition]{\protect\examplename}
\theoremstyle{plain}
\newtheorem{lemma}[definition]{\protect\lemmaname}
\newtheorem*{cor*}{\protect\corollaryname}
\newtheorem{proposition}[definition]{\protect\propname}
\newtheorem*{question*}{\protect\questionname}
\newtheorem*{assumption*}{\protect\assumptionname}
\newtheorem{conjecture}{\protect\conjecturename}
\newenvironment{customthm}[1]
  {\innercustomthm}
  {\endinnercustomthm}
\providecommand{\questionname}{Question}
\providecommand{\assumptionname}{Assumption}
\providecommand{\observationname}{Observation}
\providecommand{\corollaryname}{Corollary}
\providecommand{\definitionname}{Definition}
\providecommand{\lemmaname}{Lemma}
\providecommand{\claimname}{Claim}
\providecommand{\theoremname}{Theorem}
\providecommand{\exercisename}{Exercise}
\providecommand{\examplename}{Example}
\providecommand{\remarkname}{Remark}
\providecommand{\propname}{Proposition}
\providecommand{\conjecturename}{Conjecture}
\newtheorem{open}{Open Question}
\newcommand{\C}{\mathcal{C}}
\newcommand{\F}{\mathcal{F}}
\newcommand{\str}{\mathtt{shatter}}
\newcommand{\vc}{\mathtt{vc}}
\newcommand{\conv}{\mathtt{conv}}
\newcommand{\radon}{\mathtt{r}}
\newcommand{\bu}{\mathtt{bu}}
\newcommand{\bb}[1]{{\textcolor{blue}{#1}}}
\newcommand{\R}{\mathbb{R}}
\newcommand{\Z}{\mathbb{Z}}
\newcommand{\I}{\mathcal{I}}
\title{Dual VC Dimension Obstructs Sample Compression by Embeddings}
\date{May 23, 2024}
\author{Zachary Chase\footnote{Department of Mathematics, Technion. Supported by the European Union (ERC, GENERALIZATION, 101039692).} 
\and Bogdan Chornomaz\footnote{Department of Mathematics, Technion. Supported by the European Union (ERC, GENERALIZATION, 101039692).} 
\and Steve Hanneke\footnote{Department of Computer Science, Purdue University}
\and Shay Moran\footnote{Departments of Mathematics, Computer Science, and Data and Decision Sciences, Technion and Google Research.
Robert J.\ Shillman Fellow; supported by ISF grant 1225/20, by BSF grant 2018385, by an Azrieli Faculty Fellowship, by Israel PBC-VATAT, by the Technion Center for Machine Learning and Intelligent Systems (MLIS), and by the European Union (ERC, GENERALIZATION, 101039692). Views and opinions expressed are however those of the author(s) only and do not necessarily reflect those of the European Union or the European Research Council Executive Agency. Neither the European Union nor the granting authority can be held responsible for them.
} 
\and Amir Yehudayoff\footnote{Department of Mathematics, Technion and Department of Computer Science, Copenhagen University. Supported by the BSF, 
by the Danish National Research Foundation, 
and the Pioneer Centre for AI, DNRF grant number P1.}}
\begin{document}

\maketitle

\begin{abstract}
This work studies embedding of arbitrary VC classes in well-behaved VC classes, focusing particularly on extremal classes.
Our main result expresses an impossibility:
such embeddings necessarily require a significant increase in dimension.
In particular, we prove that for every $d$ there is a class with VC dimension $d$
that cannot be embedded in any extremal class of VC dimension smaller than exponential in $d$.

In addition to its independent interest, this result has an important implication in learning theory, as it reveals a fundamental limitation of one of the most extensively studied approaches to tackling the long-standing sample compression conjecture \citep{Warmuth:03}. 
Concretely, the approach proposed by \cite{floyd:95} entails embedding any given VC class into an extremal class of a comparable dimension, and then applying an optimal sample compression scheme for extremal classes. However, our results imply that this strategy would in some cases result in a sample compression scheme at least exponentially larger than what is predicted by the sample compression conjecture.

The above implications follow from a general result we prove: any extremal 
class with VC dimension $d$ has dual VC dimension at most $2d+1$. 
This bound is exponentially smaller than the classical bound $2^{d+1}-1$ of \cite{assouad:83}, 
which applies to general concept classes (and is known to be unimprovable for some classes).
We in fact prove a stronger result, establishing that $2d+1$ upper bounds the dual Radon number of extremal classes.
This theorem represents an abstraction of the classical Radon theorem for convex sets, 
extending its applicability to a wider combinatorial framework, without relying on the specifics of Euclidean convexity.
The proof utilizes the topological method and is primarily based on variants of the Topological Radon Theorem~\citep{Bajmoczy:79}.
\end{abstract}


\section{Introduction and Main Results}

A common idea throughout mathematics is the one of transforming an abstract object of interest into a related ``nice'' object, possessing useful properties.
For example, transforming general functions into smooth functions, 
transforming arbitrary graphs into connected graphs,
or transforming geometric bodies into convex sets.
In algebraic geometry, a classical example is the resolution of singularities where the goal is to ``make varieties smooth''.
Quantitatively, it is often important to understand how much we need to ``add'' 
to an arbitrary object to make it ``nice''.
For instance, 
we can ask how many edges we need to add to a graph to make it connected,
or how much volume we must add to a body to make it convex. 

This general idea has also come up in various contexts within learning theory. 
For example the use of surrogate or regularized losses to accelerate optimization algorithms. 
Another example, which arises in the context of sample compressions concerns embedding VC classes 
within \emph{extremal} classes (a.k.a.\ lopsided, ample, or shattering-extremal; see e.g.~\cite{lawrence:83,Dress:96,Moran:12}).
Extremal classes enjoy many properties which are useful in the context of learning theory, 
for example, for defining sample compression schemes of size equal the VC dimension \cite{moran2016labeled}, 
or proper optimal PAC learners \cite{bousquet:20}. 
For this reason, there has been much interest in transforming arbitrary concept classes 
into extremal classes, and understanding how much one needs to add to an arbitrary class to make it extremal.
Concretely, determining how much the VC dimension must increase (defined below).

This question has been studied in many works (e.g., \citealp{floyd:95,chepoi:20,chepoi:22,rubinstein:15,Rubinstein:22}, and numerous references below), but it has remained open whether it is always 
possible to avoid a significant increase in VC dimension (e.g., a universal constant factor).
The question has been of particular interest in the literature on sample compression schemes, 
since if it were possible to embed arbitrary concept classes into extremal classes without significantly 
increasing the VC dimension, it would immediately imply a positive resolution of the long-standing 
\emph{sample compression conjecture} \citep{Warmuth:03}.

In the present work, we prove that for some concept classes $\C$,
it is necessary to add \emph{quite a lot} to make it extremal; 
that is, any extremal class containing $\C$ must have 
\emph{exponentially larger} VC dimension.
More generally, for any class $\C$, 
we argue that the increase in VC dimension must be at least 
proportional to the \emph{Radon number} of version spaces 
(which is never smaller than the dual VC dimension).
In particular, our result implies that the commonly-studied 
approach aiming to resolve the sample compression conjecture by embedding in extremal classes
cannot significantly improve over the size of known sample compression schemes \citep*{moranyehudayoof:16}.

\subsection{The Sauer-Shelah-Perles and Pajor Inequalities}
The Sauer-Shelah-Perles\footnote{The result was independently proven by \cite{vapnik:68,sauer:72,Shelah:72}, with \cite{vapnik:68} having proven a slightly weaker version. \cite{Shelah:72} gives credit also to Micha Perles. Word has it that, amusingly, Perles proved the result twice, ten years apart, 
and with different proofs! This unique contribution has humorously led some to refer to it as the Perles-Sauer-Shelah-Perles Lemma.} (SSP) inequality is one of the most basic results in the Vapnik-Chervonenkis (VC) theory. 
It is a key technical lemma in the proof of the fundamental equivalence between finite VC dimension, uniform laws of large numbers, and PAC learnability. 
The SSP inequality also plays a significant role in geometry, combinatorics, and model theory.

\begin{definition}[VC Dimension]
A concept class $\C \subseteq \{0,1\}^n$ is said to \emph{shatter} a set $\{x_1, \ldots, x_k\} \subseteq [n]$ if $$\{(c(x_1), \ldots, c(x_k)): c \in \C\} = \{0,1\}^k$$ (by convention, the empty set is always shattered). The VC dimension of $\C$, denoted by $\mathtt{vc}(\C)$, is the largest size of a set that is shattered by $\C$.
\end{definition}

\begin{theorem}[Sauer-Shelah-Perles Inequality \citep{sauer:72, Shelah:72}]\label{t:sauer}
    Let $\C\subseteq\{0,1\}^n$ be a concept class, and let $d = \mathtt{vc}(\C)$. Then,
    \[\lvert\C\rvert \leq {n \choose \leq d} := \sum_{i=0}^d {n \choose i}.\]
\end{theorem}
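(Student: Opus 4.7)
The plan is to prove the inequality by induction on $n$, with the base case $n=0$ being immediate (any $\C \subseteq \{0,1\}^0$ has at most one element, and $\binom{0}{\le d} \ge 1$). For the inductive step, I would fix the coordinate $n$ and decompose $\C$ via two auxiliary classes over the first $n-1$ coordinates: the projection
\[
\C' \;=\; \{c|_{[n-1]} : c \in \C\},
\]
and the ``collision'' class of those restrictions that lift to $\C$ in both possible ways at coordinate $n$,
\[
\C'' \;=\; \{c|_{[n-1]} : (c|_{[n-1]},0) \in \C \text{ and } (c|_{[n-1]},1) \in \C\}.
\]
A routine counting argument (partitioning $\C'$ according to whether its fiber in $\C$ has size $1$ or $2$) yields the identity $|\C| = |\C'| + |\C''|$.

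The heart of the argument is the pair of VC bounds $\vc(\C') \le d$ and $\vc(\C'') \le d-1$. The first is immediate, since any set shattered by $\C'$ is also shattered by $\C$ (simply pull back the witnesses through the projection). The second is the only place where one really uses the definition of $\C''$: if $\C''$ shatters some $S \subseteq [n-1]$, then by definition every pattern on $S$ lifts to $\C$ in two ways, one with coordinate $n$ equal to $0$ and one with coordinate $n$ equal to $1$, which means $\C$ shatters $S \cup \{n\}$ and hence $|S|+1 \le d$. I expect this VC bound on $\C''$ to be the main (and only) conceptual step of the proof.

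With these two inequalities in hand, applying the induction hypothesis to each smaller class gives
\[
|\C| \;=\; |\C'| + |\C''| \;\le\; \binom{n-1}{\le d} + \binom{n-1}{\le d-1} \;=\; \binom{n}{\le d},
\]
where the last equality is Pascal's identity applied term by term. As an alternative route worth flagging, one could first establish the sharper Pajor inequality $|\C| \le |\str(\C)|$ by essentially the same induction (tracking shattered-set collections rather than cardinalities: the shattered sets of $\C$ are obtained by unioning those of $\C'$ with those of $\C''$ augmented by the coordinate $n$), and then observe that every member of $\str(\C)$ has size at most $d$, so $|\str(\C)| \le \binom{n}{\le d}$.
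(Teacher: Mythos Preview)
Your argument is correct and is the standard inductive proof of the Sauer--Shelah--Perles inequality. Note, however, that the paper does not supply its own proof of this theorem: it is stated as a classical result with citations to \cite{sauer:72,Shelah:72} and is used only as background, so there is nothing to compare your approach against. Your suggested alternative route via Pajor's inequality is in fact exactly how the paper relates the two results (it remarks that Pajor's inequality implies the SSP inequality because every shattered set has size at most $d$), though again no proof of Pajor's inequality is given there either.
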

The SSP inequality is tight: for every $n$ and $d$, there exist classes that achieve equality in it.
These classes are interesting and have a relatively rigid structure. 
They also arise naturally in geometrically defined classes such as halfspaces in $\mathbb{R}^n$~\citep{Gartner:94}.

\begin{definition}[Maximum Classes]
A class $\C\subseteq\{0,1\}^n$ is called a \emph{maximum} class if \(\lvert\C\rvert = {n \choose \leq d}\), where $d = \mathtt{vc}(\C)$.
\end{definition}

A notable generalization of the SSP inequality was proven by~\cite{Pajor:1985}:
\begin{theorem}[Pajor Inequality \citep{Pajor:1985}]\label{t:pajor}
    Let $\C\subseteq\{0,1\}^n$ be a concept class, and let 
    \[\str(\C) = \{A\subseteq [n]: A\text{ is shattered by } \C\}.\] 
    Then,
    \[\lvert\C\rvert \leq  \lvert \str(\C)\rvert.\]
\end{theorem}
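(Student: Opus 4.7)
The plan is to prove \Cref{t:pajor} by induction on $n$, via the standard decomposition along the last coordinate. The base case $n=0$ is trivial: the only possible function is the empty function, so $|\C| \le 1$, and for any nonempty $\C$ the empty set is shattered by convention, giving $|\str(\C)| \ge 1 \ge |\C|$.

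For the inductive step, I would split $\C$ according to its value on coordinate $n$. Set
\[
\C_\pi = \{c|_{[n-1]} : c \in \C\} \subseteq \{0,1\}^{n-1}, \qquad
\C_\cap = \bigl\{y \in \{0,1\}^{n-1} : (y,0) \in \C \text{ and } (y,1) \in \C\bigr\}.
\]
A direct double count gives $|\C| = |\C_\pi| + |\C_\cap|$: each $y \in \C_\pi$ is the projection of at least one concept, and the elements of $\C_\cap$ are precisely those having a second preimage.

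The heart of the argument is to establish the matching decomposition on the shatter side,
\[
\str(\C) \;=\; \str(\C_\pi) \;\sqcup\; \bigl\{\,A \cup \{n\} : A \in \str(\C_\cap)\,\bigr\}.
\]
The containment ``$\supseteq$'' is immediate from the definitions. For ``$\subseteq$'', if $B \in \str(\C)$ and $n \notin B$, then $B$ is obviously shattered by the projection $\C_\pi$. If $n \in B$, shattering $B$ by $\C$ means that for every labeling $\sigma$ of $B \setminus \{n\}$, both extensions of $\sigma$ by $0$ and by $1$ on coordinate $n$ are realized in $\C$; this is exactly the assertion that $B \setminus \{n\}$ is shattered by $\C_\cap$. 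I expect this equivalence — especially the direction forcing $B \setminus \{n\}$ to be shattered by $\C_\cap$ — to be the one place where care is needed, since it hinges on $\C_\cap$ capturing ``doubled'' restrictions rather than mere appearances.

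Once both decompositions are in hand, applying the inductive hypothesis to $\C_\pi$ and $\C_\cap$ (both of which live on the ground set $[n-1]$) yields $|\C_\pi| \le |\str(\C_\pi)|$ and $|\C_\cap| \le |\str(\C_\cap)|$. Summing these two inequalities and using the disjoint union above gives $|\C| \le |\str(\C)|$, completing the induction. As a sanity check, \Cref{t:sauer} is recovered by noting that $\str(\C)$ consists only of sets of size at most $\vc(\C) = d$, of which there are at most $\binom{n}{\le d}$.
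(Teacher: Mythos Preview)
The paper does not supply its own proof of \Cref{t:pajor}; it merely quotes the result from \cite{Pajor:1985}. Your inductive argument via the projection/intersection decomposition is the standard one and reaches the right conclusion, but the displayed \emph{equality}
\[
\str(\C) \;=\; \str(\C_\pi) \;\sqcup\; \bigl\{\,A \cup \{n\} : A \in \str(\C_\cap)\,\bigr\}
\]
is false in general; only the containment ``$\supseteq$'' holds. Your justification of ``$\subseteq$'' when $n\in B$ is the flaw: shattering $B$ yields, for each labeling $\sigma$ of $A=B\setminus\{n\}$, concepts $c_0,c_1\in\C$ with $c_b|_A=\sigma$ and $c_b(n)=b$, but $c_0$ and $c_1$ may disagree on $[n-1]\setminus A$, so there is no reason any single $y\in\C_\cap$ realizes $\sigma$. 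Concretely, take $\C=\{000,011,101,110\}\subseteq\{0,1\}^3$: then $\C_\cap=\emptyset$ (relative to coordinate $3$), yet $\{3\},\{1,3\},\{2,3\}\in\str(\C)$.

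This does not damage the overall proof, because the inequality only needs the ``$\supseteq$'' direction together with disjointness:
\[
|\str(\C)|\;\ge\;|\str(\C_\pi)|+|\str(\C_\cap)|\;\ge\;|\C_\pi|+|\C_\cap|\;=\;|\C|.
\]
So simply drop the attempted ``$\subseteq$'' argument and state only the inclusion you actually use. (You correctly anticipated that this was the delicate spot.)
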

Pajor's inequality indeed implies the SSP inequality, because
the size of $A \in \str(\C)$
is $|A| \leq \mathtt{vc}(\C)$.
Pajor's inequality is also tight. Classes that meet Pajor's inequality with equality are known as lopsided~\citep{lawrence:83}, ample~\citep{Dress:96}, or shattering-extremal~\citep{Moran:12}. In this text, we use the term \emph{extremal} \citep{Moran:12,moran2016labeled}.
\begin{definition}[Extremal Classes]
A class $\C\subseteq\{0,1\}^n$ is called \emph{extremal} if \(\lvert\C\rvert = \lvert\str(\C)\rvert\).
\end{definition}
While every maximum class is extremal, the converse does not hold.

Extremal classes are notable for their characterization through a range of equivalent definitions, which may initially seem quite distinct. For example, \cite{lawrence:83} discovered extremal classes in his investigation of sign patterns of convex sets: $\{\mathtt{sign}(v): v \in K\}$, where $K \subseteq \mathbb{R}^n$ is convex. He defined these classes in a manner that is unrelated to \Cref{t:pajor} and does not even rely on the notion of \emph{shattering}. Consequently, these classes have been independently identified in various contexts, including discrete geometry~\citep{lawrence:83}, functional analysis~\citep{Pajor:1985}, extremal combinatorics~\citep*{Bollobas:89, Bollobas:95}, and computational biology~\citep{Dress:96}.

\subsection{Sample Compression Schemes}

Sample compression schemes were introduced by \cite{littlestone:86} in their seminal paper as a tool for proving generalization bounds in learning theory.
 More generally, this concept can be understood as a mathematical model for data simplification. It is akin to a scientist who collects a large number of experimental observations and then selects a small, representative subset. The scientist's aim is to develop a hypothesis from this subset that can accurately predict the outcomes of the remaining, unselected observations.

A sample compression scheme comprises a {\it compressor} and a {\it reconstructor} (see Figure~\ref{fig:comp}).
The compressor gets an input sequence of labeled examples $S$.
From this sequence, the compressor selects a subsequence $S'$ and sends it to the reconstructor, along with a binary string $B$ of additional information.
The reconstructor then uses $S'$ and $B$ to generate a hypothesis~$h=h(S',B)$.
The aim is that the hypothesis $h$ correctly classifies the entire input sequence $S$, including on those examples in $S$ that are not transmitted to the reconstructor.

\begin{definition}
A sample compression scheme for a concept class~\(\mathcal{C}\) is defined by the following criterion. For any input sequence \(S=\{(x_i,y_i)\}_{i=1}^m\) that is realizable\footnote{I.e.\ there exists $c\in \C$ such that \(c(x_i)=y_i\) for all $i\leq m$.} by \(\mathcal{C}\), the hypothesis \(h = h(S', B)\) generated by the reconstructor must be consistent with the entire input sample \(S\); i.e.\ \(h(x_i)=y_i\) for all $i \in [m]$.
We say that the size of the compression scheme is $s$ if the number of bits in $B$
plus the size of $S'$
is at most $s$.
\end{definition}

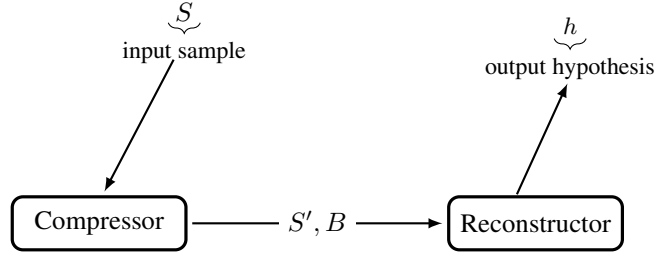
\begin{figure}
\centering
    \textbf{A pictorial definition of a sample compression scheme}\par\bigskip   
\begin{tikzpicture}[scale=0.7]

\node[inner sep=0pt,punkt] (bob) at (7.1,0) {Reconstructor};

\node[inner sep=0pt,punkt] (alice) at (-1.1,0){Compressor};

\draw[->,thick] (0.65,0) -- (5.35,0)
    node[midway,fill=white] {$S',B$};


\node[inner sep=0pt] (sample) at (0.5,4){$S$};
\draw[->,thick] (0.3,3.1) -- (-1,0.6);

\draw [decorate,decoration={brace,amplitude=4pt,mirror},yshift=0pt]
(0.2,3.8) -- (0.8,3.8) node [black,midway,yshift=-0.4cm]
{\small input sample};

\node[inner sep=0pt] (reconstruction) at (7.75,3.75){\small $h$};

\draw [decorate,decoration={brace,amplitude=4pt,mirror},yshift=0pt]
(7.45,3.5) -- (8.05,3.5) node [black,midway,yshift=-0.4cm]
{\small output hypothesis};

\draw[->,thick] (6.75,0.55) -- (7.7,2.65);
\end{tikzpicture}
\caption{
$S'$ is a subsample of $S$ and $B$ is a binary string of additional information.}
\label{fig:comp}
\end{figure}

A classical example of a sample compression scheme for the class of $d$-dimensional linear classifiers is the {\it Support Vector Machine} algorithm in $\mathbb{R}^d$. 
Here, the compressor sends to the reconstructor the~$d+1$ support vectors which determine the maximum margin separating hyperplane. 

Many well-studied classes admit a sample compression of size equal to their VC dimension.
\cite{littlestone:86} asked
about a general connection between the VC dimension and the size of the sample compression scheme.
This became one of the most extensively studied and long-standing problems in the theory of learning. 
\cite{moranyehudayoof:16} affirmatively demonstrated the existence of a sample compression scheme with size exponential in the VC dimension. Yet, it remains an open question whether every class has a compression scheme of size linear or polynomial in its VC dimension, as explicitly asked in \cite{floyd:95}. \cite{Warmuth:03} has offered a \(\$600\) reward for solving this problem. For a broader discussion, see e.g.\ Chapter~17 in \cite{Wigderson:17} or Chapter~30 in \cite{Shaelv-Shwartz:book}.

\begin{conjecture}[\cite{floyd:95,Warmuth:03}]\label[Conjecture]{con:scs}
For every concept class $\C$ there exists a sample compression scheme of size $O(\mathtt{vc}(\C))$.
\end{conjecture}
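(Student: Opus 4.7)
The statement is the long-standing sample compression conjecture, for which Warmuth has offered a \$600 reward; no proof is known. The approach I would pursue first is the embedding strategy of Floyd and Warmuth: given a class $\C$ with $\vc(\C)=d$, try to build an extremal superclass $\C'\supseteq\C$ whose VC dimension remains $O(d)$. If this succeeds, the conjecture follows immediately, since any sample realizable by $\C$ is realizable by $\C'$, and extremal classes are already known to admit sample compression schemes of size equal to their VC dimension.

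My concrete plan would be to construct $\C'$ greedily. Pajor's inequality says $\lvert\C\rvert\leq\lvert\str(\C)\rvert$, with extremality being the equality case. So the natural attempt is to repeatedly adjoin to $\C$ concepts that realize already-shattered patterns currently missing from $\C$, pushing $\lvert\C\rvert$ toward $\lvert\str(\C)\rvert$ while hoping not to create shattered sets of size exceeding $d$. A parallel avenue is to work through shifts and downward closures of the class, as in classical proofs of the SSP inequality, in order to reach a canonical extremal representative with controlled VC dimension. Either way, the output $\C'$ is then fed into the extremal sample compression scheme to complete the argument.

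The hard part, on which this paper announces a decisive negative answer, is controlling the growth of VC dimension during such completions: adjoining a single new concept can force new shattered sets above the original dimension, and the repair can cascade. The abstract asserts the bound $\vc^*(\C')\leq 2\vc(\C')+1$ for every extremal $\C'$; combined with the monotonicity $\vc^*(\C)\leq\vc^*(\C')$ whenever $\C\subseteq\C'$, and with Assouad's tight $\vc^*\leq 2^{d+1}-1$ being achievable on some VC-$d$ classes, this forces $\vc(\C')$ to be exponential in $d$ in the worst case. Consequently a proof of the conjecture cannot go through extremal embeddings and would have to either construct compression schemes directly---for instance, by refining the probabilistic argument of Moran and Yehudayoff, which currently yields only exponential size---or identify a weaker structural target than extremality that still supports optimal compression while being reachable with only an $O(d)$ dimension increase.
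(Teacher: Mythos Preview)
The statement is a \emph{conjecture}, not a theorem: the paper does not prove it, and neither does anyone else. You correctly recognize this in your opening sentence, so your submission is not a proof but a discussion of strategy. That discussion is accurate and well-aligned with the paper: the Floyd--Warmuth embedding program is exactly what the paper analyzes, and your derivation of the obstruction---combining the paper's bound $\vc^\star(\C')\le 2\vc(\C')+1$ for extremal $\C'$ with monotonicity of $\vc^\star$ under inclusion and Assouad's tightness---is precisely the argument the paper gives to deduce \Cref{t:a} from \Cref{t:b}.

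One small correction: you say you would try to build $\C'$ greedily by adjoining concepts realizing already-shattered patterns ``while hoping not to create shattered sets of size exceeding $d$.'' This is not merely hard in the worst case; the paper's \Cref{t:a} shows it is \emph{impossible} for some classes to keep $\vc(\C')=O(d)$, regardless of how the completion is performed. So the greedy approach is not a proof sketch with a gap to be filled---it is an approach the paper proves cannot succeed. Beyond that, since there is no proof in the paper to compare against, there is nothing further to evaluate.
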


In their important work, \cite{floyd:95} layed an approach aimed at resolving the sample compression conjecture. This approach is based on a simple two-step program.

\begin{framed}
\textbf{Floyd and Warmuth Program}
\begin{enumerate}
\item Prove that any maximum class \(\mathcal{C}\) with VC dimension \(d\) has a sample compression scheme of size \(O(d)\).
\item Prove that any class \(\mathcal{C}\) with VC dimension \(d\) is contained within a maximum class with VC dimension \(O(d)\).
\end{enumerate}
\vspace{-1em}
\end{framed}

The Floyd and Warmuth program has been a prominent method in attempts to resolve Conjecture~\ref{con:scs}. Several studies have explored this approach, including works by \citet*{Ben-David:98,Warmuth:03,Kuzmin:07,Rubinstein:09,Rubinstein:12,rubinstein:15,moran2016labeled, chepoi:20,Chepoi:21,chepoi:22,Chalopin:22,Rubinstein:22,Chalopin:23}.

In their work, \cite{floyd:95} addressed the first step of their program by demonstrating that every maximum class \(\mathcal{C}\) admits a sample compression scheme of size equal to its VC dimension.
Following this, a series of subsequent works expanded on the scheme proposed by \cite{floyd:95}. 
Specifically, \cite{moran2016labeled} extended it to extremal classes, thereby affirming Conjecture~\ref{con:scs} for this category of classes. This extension not only confirmed the conjecture for extremal classes but also simplified Floyd and Warmuth's program. It demonstrated that the second step of their program could be relaxed, showing that it is sufficient to embed any class in an extremal class while maintaining a comparable VC dimension.


For the second step, \cite{floyd:95} exhibited a class \(\mathcal{C}\) that is maximal (not maximum), meaning that any addition of a concept to this class would increase its VC dimension, but is not maximum. This finding implies that not every class \(\mathcal{C}\) can be extended to a maximum class without increasing the VC dimension.
In the other direction, a recent study by \cite{Rubinstein:22} proved that any intersection-closed class is contained in an extremal class with a VC dimension at most~$11$ times larger.

Our first main result places a barrier on Floyd and Warmuth's program,  proving that it cannot prove Conjecture~\ref{con:scs}.
\begin{framed}
\vspace{-1em} 
\begin{customthm}{A}[Main Result I]\label{t:a}
 For every integer $d>0$, there exists a concept class $\C$ with $\vc(\C)=d$
    such that every extremal class $\C'$ such that $\C\subseteq\C'$ has \(\vc(\C')\geq 2^d - 1\).
\end{customthm}
\vspace{-1em} 
\end{framed}

\Cref{t:a} suggests that the best outcome achievable via Floyd and Warmuth's program would be a sample compression scheme of size \(2^{d}-1\). This would be on par with the best known scheme by \cite{moranyehudayoof:16}, which is of size \( 2^{O(d)}\). However, even reaching this benchmark appears non-trivial,
which brings us back to the first paragraph of this work. 
Can we embed arbitrary classes in ``nice'' classes?
\begin{open}
Is there a function \(f:\mathbb{N}\to\mathbb{N}\) such that every concept class with VC dimension~\(d\) is contained in a maximum (extremal) class with VC dimension \(f(d)\)?
\end{open}

A special case of maximum classes is hyperplane arrangements (see~\Cref{ex:hyperplanes}). 
The smallest dimension of a hyperplane arrangement that contains a class $\C$ is equivalent to the sign rank of the matrix $(M_{c,x})$ defined by $M_{c,x} = c(x)$.
\citet*{alon2017sign} proved that if we replace ``maximum class'' by ``hyperplane arrangment'' in the question, then the answer is negative.
For every $m$, there is a class of VC dimension two that cannot be embedded in an $m$-dimensional hyperplane arrangement.

\subsection{Dual VC Dimension}

We establish \Cref{t:a} by distinguishing extremal classes from general classes with the same VC dimension, focusing on the interplay between VC and dual VC dimensions. 

\begin{definition}[Dual Class and VC dimension]
Let $\C \subseteq \{0,1\}^n$
be a class.
Each $x \in [n]$ defines a function $f_x:\C \to \{0,1\}$ by
$f_x(c) = c(x)$.
The dual class of $\C$ is
$\C^\star= \{f_x : x \in [n]\}$.
The dual VC dimension of $\C$
is $\mathtt{vc}^\star(\C)
= \mathtt{vc}(\C^\star)$.
\end{definition}


The VC and dual VC dimensions are exponentially related:
\begin{theorem}[\cite{assouad:83}]\label{t:assouad}
Every concept class $\C$ satisfies 
\[\lfloor \log \vc(\C)\rfloor \leq \mathtt{vc}^\star(\C) \leq 2^{\mathtt{vc}(\C)+1} - 1.\]
\end{theorem}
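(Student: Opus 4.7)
The plan is to prove the two inequalities separately, with the upper bound obtained from the lower bound by bootstrapping through the double dual. The whole argument is essentially combinatorial and rests on a single structural idea: a shattered set in $\C$ produces a rich ``coordinate matrix'' in $\{0,1\}^{d\times 2^d}$, and we can then mine shattered sets in either $\C$ or $\C^\star$ from this matrix.

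For the lower bound, I will set $d=\vc(\C)$ and $k=\lfloor \log d\rfloor$, so $2^k\le d$. Since $\C$ shatters a set $\{x_1,\ldots,x_d\}\subseteq[n]$, for every $A\subseteq[d]$ I can fix a concept $c_A\in\C$ with $c_A(x_i)=\mathbf{1}[i\in A]$. I then pick an injection $\sigma\mapsto i_\sigma$ from $\{0,1\}^k$ into $[d]$ (available because $2^k\le d$) and define the $k$ subsets
\[
A_j \;=\; \{\,i_\sigma \,:\, \sigma\in\{0,1\}^k,\ \sigma_j=1\,\} \;\subseteq\;[d], \qquad j\in[k].
\]
For each pattern $\sigma\in\{0,1\}^k$ and each $j\in[k]$, the point $x_{i_\sigma}\in[n]$ satisfies $c_{A_j}(x_{i_\sigma})=\mathbf{1}[i_\sigma\in A_j]=\sigma_j$. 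Hence $f_{x_{i_\sigma}}$ realises the pattern $\sigma$ on $\{c_{A_1},\ldots,c_{A_k}\}$, which means this set of $k$ concepts is shattered by $\C^\star$; thus $\vc^\star(\C)\ge k=\lfloor\log\vc(\C)\rfloor$.

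For the upper bound, I will apply the lower bound just proved to the class $\C^\star$ and combine it with a double-dual observation. Concretely, the lower bound gives
\[
\lfloor \log \vc(\C^\star)\rfloor \;\le\; \vc^\star(\C^\star)\;=\; \vc(\C^{\star\star}).
\]
For each $c\in\C$, the map $g_c\colon \C^\star\to\{0,1\}$, $g_c(f_x)=c(x)$, is well-defined, and distinct concepts give distinct $g_c$'s; the family $\{g_c:c\in\C\}$ is $\C^{\star\star}$. If a set $\{f_{x_1},\ldots,f_{x_r}\}\subseteq\C^\star$ is shattered by $\C^{\star\star}$, then for every $\tau\in\{0,1\}^r$ some $c\in\C$ satisfies $c(x_i)=\tau_i$ for all $i$, so $\{x_1,\ldots,x_r\}$ is shattered by $\C$. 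This shows $\vc(\C^{\star\star})\le \vc(\C)$, and substituting gives $\lfloor\log \vc^\star(\C)\rfloor\le \vc(\C)$, equivalently $\vc^\star(\C)\le 2^{\vc(\C)+1}-1$.

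The only place requiring a small amount of care is the reindexing in the lower bound: one must ensure that the assignment $\sigma\mapsto i_\sigma$ is injective, which is precisely what the choice $k=\lfloor \log d\rfloor$ guarantees. Apart from that, no step is particularly delicate; the argument is essentially a clean application of duality, and the power of the statement comes from the asymmetry between the floor-logarithm on one side and the exponential on the other, both of which are tight as noted in the paper.
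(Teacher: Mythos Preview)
Your proof is correct. The paper does not actually supply its own proof of this theorem; it is quoted as a classical result of Assouad and used as a black box (in particular to derive \Cref{t:a} from \Cref{t:b} and in the proof of \Cref{p:t-d}). Your argument is the standard one: exhibit $k=\lfloor\log d\rfloor$ concepts in $\C$ that are dually shattered by reading off binary digits of indices in a shattered $d$-set, and then bootstrap the upper bound from the lower bound via $\vc(\C^{\star\star})\le\vc(\C)$. All the small bookkeeping points (injectivity of $\sigma\mapsto i_\sigma$, well-definedness of $g_c$ on $\C^\star$, and that distinct $f_{x_i}$'s force distinct $x_i$'s so the recovered shattered set in $[n]$ really has size $r$) check out.
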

Both inequalities are sharp (in some cases). While the relationship between VC and dual VC dimensions is fully understood for general classes, our result provides an exponential improvement for extremal classes
(in the right inequality).

\begin{framed}
\vspace{-1em} 
\begin{customthm}{B}[Main Result II]\label{t:b}
 Every extremal concept class $\C$ satisfies 
\[\lfloor \log \vc(\C)\rfloor \leq \mathtt{vc}^\star(\C) \leq 2\mathtt{vc}(\C)+1.\]
\end{customthm}
\vspace{-1em} 
\end{framed}

\Cref{t:a} is a corollary of \Cref{t:b}. Indeed, choose a concept class $\C$, with $\vc(\C)=d$ and $\vc^\star(\C) = 2^{d+1}-1$. This class can be chosen as a dual of a class $\C_1$, witnessing the sharpness of the right bound in~\Cref{t:assouad}, that is, of $\C_1$ for which $\vc^\star(\C_1)=d$ and $\vc(\C_1)=2^{d+1}-1$.
Let $\C'$ be an extremal class such that $\C'\supseteq \C$.
\Cref{t:b} implies that
$$2^{d+1}-1 = \vc^\star(\C) 
\leq \vc^\star(\C')
\leq 2 \vc(\C')+1. $$
The left bound in \Cref{t:b} follows from \Cref{t:assouad}; it is tight for all VC dimensions as demonstrated by cubes (see Example~\ref{ex:cube} in \Cref{sec:examples}).
We leave it as an open question whether the upper bound in \Cref{t:b} is tight for all VC dimensions. In \Cref{sec:examples} we demonstrate that it is tight for \(d=1\) (\Cref{ex:d=1}) and 
notice that hyperplane arrangements yield extremal classes (in fact, maximum classes) with \(\mathtt{vc}(\mathcal{C})=d\) and \(\mathtt{vc}^\star(\mathcal{C}) = d+1\) (\Cref{ex:hyperplanes}).
\begin{open}\label{eq:dvc}
Is it true that for every \(d \in \mathbb{N}\) there exists an extremal class \(\mathcal{C}\) with \(\mathtt{vc}(\mathcal{C})=d\) and \(\mathtt{vc}^\star(\mathcal{C})=2d+1\)? Is there such a maximum class?
\end{open}

\subsection{Abstract Convexity}
To state our most general result we use the language of abstract convexity theory.
An (abstract) convexity space offers a simplified yet profound abstraction of Euclidean convexity. This concept originated in the work of \citet{levi1951helly} and was later given its current formalization by \citet*{kay1971axiomatic}. For a detailed introduction to this subject, readers may refer to the survey by \citet*{danzer1963helly}, or the more recent book \citet{van1993theory}.

A {\em convexity space} is a pair $(\Omega,\F)$ where $\F \subseteq 2^\Omega$ is a family of subsets that satisfies the following conditions:
\begin{itemize}
\item $\emptyset,\Omega\in \F$.
\item $\F$ is closed under intersections.
\end{itemize}
Sets within $\F$ are termed \emph{convex sets}. Convexity spaces are prevalent in mathematics, manifesting in forms such as closed sets in topological spaces, subgroups of groups, subtrees of graphs, and more. We note that in \citet{van1993theory}, the convexity space is also required to be closed under nested unions. However, all the convexity spaces that we use in this paper can be assumed to be finite, in which case this requirement becomes redundant.

Many basic concepts associated with convexity are still definable in this abstract setting. For instance, a convex set \( F \in \F \) is termed a \emph{half-space} if its complement is also convex.\footnote{In the standard Euclidean context, not all half-spaces are open or closed.} 
The \emph{convex hull} of a set \( P \subseteq \Omega \), denoted by \( \conv(P) = \conv_\F(P) \), is the intersection of all convex sets \( F \in \F \) that contain \( P \). These definitions enable the abstraction of classical results, such as Radon's Theorem~\citep{Radon:21}, Helly's Theorem~\citep{Helly:23}, and the Weak Epsilon-Net Theorem~\citep*{barany:1990,alon:1992,MoranY:20,holmsen:2021}.


\begin{definition}[Radon Numbers]
Let $(\Omega,\F)$ be a convexity space. We say that $p_1,\ldots, p_n\in \Omega$ are \emph{Radon-independent} if $\conv(\{p_i : i\in I\})\cap \conv(\{p_j : j\in J\})=\emptyset$ for every non-trivial partition $I\cup J = [n]$. The \emph{Radon number} of $\F$ is the largest size of a Radon-independent set.
\end{definition}

The classical Radon Theorem asserts that the Radon number of the convex sets in $\mathbb{R}^d$ is equal to $d+1$. 
We now explain how to 
associate a convexity space 
to every concept class.
These convexity spaces were implicitly studied in learning theory (more details follow).
We note that there are also other ways to associate a convexity space to a concept class
(for example, version spaces\footnote{Given a concept class $\C$ and a sequence of labeled examples \(\{(x_i,y_i)\}_{i=1}^m\), the associated version space is $\{c\in \C: c(x_i)=y_i \text{ for all }i\}$.} are convex). 

\begin{definition}[Convexity Space of a Class]\label{def-conv-space}
The convexity space of a class $\C\subseteq\{0,1\}^n$ is defined as follows.
The domain is $\Omega = \C$.
The half-spaces are sets of the form 
$$\C_{x,y}=\{c\in \C: c(x)=y\}$$ where $x\in [n]$ and $y\in\{0,1\}$.
The convex sets in $\F = \F_\C$ are arbitrary intersections of these half-spaces. 
\end{definition}

For $X\subseteq [n]$ and $t\colon X\rightarrow \{0, 1\}$, let us define 
$$\C_{X, t} = \{c\in \C: c(x) = t(x)\text{ for }x\in X\}.$$
Because $\C_{X, t} = \bigcap_{x\in X}\C_{x, t(x)}$, the set $\C_{X, t}$ is convex. The other way around, every non-empty convex set has this form. 


\begin{definition}[Radon Number of a Class]
The \emph{Radon number of a concept class $\C$}, denoted $\radon(\C)$, is the Radon number of the associated convexity space $\F_\C$.    
\end{definition}
We relate the dual Radon number to the VC dimension of extremal classes.

\begin{framed}
\vspace{-1em} 
\begin{customthm}{C}[Main Result III]\label{t:c}
 Let $\C \subseteq \{0,1\}^n$ be an extremal concept class. Then, 
\[\radon(\C) \leq 2\mathtt{vc}(\C)+1.\]
Moreover, the following lower bounds on $\radon(\C)$ hold
\begin{align*}
    \lfloor\log(2\vc(\C)+2)\rfloor&\leq \radon(\C) && \text{for any class $\C$}; \\
    \vc(\C)+1&\leq \radon(\C) && \text{for any maximum class $\C\neq \{0,1\}^n$}; \\
    \lfloor\log(2\vc(\C)+2)\rfloor&= \radon(\C) && \text{for $\C=\{0,1\}^n$}.
\end{align*}
\end{customthm}
\vspace{-1em} 
\end{framed}
The right bound in \Cref{t:b} follows from the first bound in \Cref{t:c}, 
because every $c_1,\ldots, c_k\in \C$ that is shattered by $\C^\star$ is Radon independent; in particular, $\vc^*(\C)\leq \radon(\C)$. Indeed if $I\cup J=[k]$ is a non-trivial partition then there exists $x$ such that $c_i(x)=1$ for all $i\in I$ and $c_j(x)=0$ for all $j\in J$. The complementing pair of half-spaces $\C_{x,1}$ and $\C_{x,0}$ separate $\{c_i: i\in I\}$ and $\{c_j: j\in J\}$ and hence their convex hulls are disjoint.

The upper bound on $\radon(\C)$ in \Cref{t:c} is the main technical result of the paper and is proven in \Cref{sec:overview}. \Cref{ex:hyperplanes} in \Cref{sec:examples} shows that it is also tight up to a factor of two. The two lower bounds on $\radon(\C)$ are proven in Propositions~\ref{p:t-c-ub1} and ~\ref{p:t-c-ub2} in \Cref{sec:proof}. Additionally, Examples~\ref{ex:cube} (cube)  and~\ref{ex:ball} (dented cube) in \Cref{sec:examples} show that these lower bounds are tight; the proofs that the values in these examples are as claimed are in Propositions~\ref{p:ex-cube} and~\ref{p:ex-ball} in   \Cref{sec:proof}.  
These lower bounds demonstrate that the dual Radon numbers can abruptly drop when adding a single concept.
Determining the tightness of the upper bound is left as an open question for future research.
\begin{open}
    Is it the case that for every \(d \in \mathbb{N}\) there exists an extremal class \(\mathcal{C}\) with \(\mathtt{vc}(\mathcal{C})=d\) and \(\radon(\mathcal{C})=2d+1\)? Is there such a maximum class?
\end{open}

Our final result summarizes the relationship between the dual Radon number and dual VC dimension.
This result, while less central, completes the triangular relationship between VC, dual VC, and dual Radon. 

\begin{framed}
\begin{customthm}{D}\label{t:d}
\begin{enumerate}
    \item 
Every concept class $\C$ satisfies $\vc^\star(\C)\leq \radon(\C)$. 
\item For general classes, $\radon(\C)$ is not upper-bounded by $\vc(\C)$ or $\vc^*(\C)$. 
\item If $\C$ is extremal, then $\radon(\C) \leq 2^{\vc^\star(\C)+2}-1$, and this bound is tight up to a multiplicative factor.
\end{enumerate}
\end{customthm}
\end{framed}

\Cref{t:d} is proven in Proposition~\ref{p:t-d} in \Cref{sec:proof}.
The fact that the dual Radon number of a class is not upper-bounded by its VC and dual VC dimensions is demonstrated by the class of singletons (\Cref{ex:singletons} in \Cref{sec:examples}). \Cref{ex:ball} (dented cube) in \Cref{sec:examples} demonstrates that for extremal classes, the dual Radon number can indeed be exponential with respect to the dual VC dimension.

\vspace{1mm}

\paragraph{Organization.}
The rest of this manuscript is organized as follows.
Section~\ref{sec:overview} provides a proof of our main result, that is, $\radon(\C) \leq 2\vc(\C) + 1$ for extremal class $\C$; it also introduces the relevant topological notions and theorems. 
In \Cref{sec:examples} we present several, rather simple, examples that demonstrate the sharpness of our bounds. The supplementary proofs, all of which are mostly elementary, are provided in \Cref{sec:proof}; this includes showing that the values of the relevant parameters in examples are indeed what they are claimed to be. Finally, in \Cref{sec:proof-discussion} we give a broader discussion on the topological notions introduced in \Cref{sec:overview}.

\section{Main Proof}\label{sec:overview}

We now present the proof for the upper bound in \Cref{t:c}. As previously discussed, \Cref{t:a} and \Cref{t:b} are derived as corollaries.
Consider an extremal class $\C \subseteq \{0,1\}^n$; our goal is to demonstrate that $\radon(\C) \leq 2\vc(\C) + 1$. 
The proof employs a topological approach. 
For this purpose, we define a cube complex (see Figure~\ref{fig:completion}).
We think of the cube complex as a topological subspace of $[0,1]^n$.
It comprises cubes of various dimensions that are glued together according to the structure of $\C$.

\begin{definition}[The Cube Complex]
The \emph{cube complex} $Q = Q(\C)\subseteq [0,1]^n$ of $\C \subseteq \{0,1\}^n$ is defined as follows.   
The vertices of $Q$ are the concepts of $\C$.
The cubes of $Q$ are the ``filled cubes of $\C$''. 

Formally, a $d$-cube in $\C$ is a pair $(Y, f)$, where $Y \subseteq [n]$ is of size $|Y|=d$ and $f$ is a function $f\colon [n] \setminus Y\rightarrow \{0,1\}$ such that for every $g\colon Y\rightarrow \{0,1\}$ there is a concept $c\in \C$ such that $c(x) = g(x)$ for every $x\in Y$ and $c(x) = f(x)$ for every $x\in [n] \setminus Y$.

For a cube $(Y, f)$ of $\C$, we define the corresponding \emph{solid cube} $(Y, f)_Q\subseteq [0,1]^n$ as 
$$(Y, f)_Q = \{v\in [0,1]^n: v(x) = f(x)\text{ for }x\in Y\}.$$
We now define the cubes of $Q$ as the collection $(Y, f)_Q$, over all cubes $(Y, f)$ of $\C$.
\end{definition} 
Let us note several basic facts about the cube complex $Q(\C)$ that we are going to utilize:
\begin{itemize}
	\item For a $d$-cube $(Y, f)$ of $\C$, $d= |Y|$ is called the \emph{dimension} of this cube, and it coincides with the topological dimension of $(Y, f)_Q$. We define the \emph{dimension} of $Q = Q(\C)$, denoted $\dim(Q)$, as the maximum dimension of its cube;
	\item $0$-cubes of $V$, also called \emph{vertices} of $V$, are precisely the concepts of $\C$. Here we assume that a concept $c\in \C\subseteq \{0, 1\}^n$ is naturally identified with the point in $[0,1]^n$ with the same coordinates. Moreover, every point of $Q(\C)$ with all coordinates in $\{0,1\}$ corresponds to a concept of $\C$. This essentially means that $\C$ is ``embedded'' into $Q(\C)$;
	\item We say that $(Y_1, f_1)$ is a \emph{subcube} of $(Y_2, f_2)$ if $Y_1\subseteq Y_2$ and $f_2$ is a restriction of $f_1$ to $[n] - Y_2$. Then $(Y_1, f_1)_Q \subseteq (Y_2, f_2)_Q$ if and only if $(Y_1, f_1)$ is a subcube of $(Y_2, f_2)$. In this case we also say that $(Y_1, f_1)_Q$ is a subcube of $(Y_2, f_2)_Q$;
	\item Any subcube of a cube $(Y, f)_Q$ of $Q(\C)$ is a cube of $Q(\C)$. Also, any cubes $(Y_1, f_1)$ and $(Y_2, f_2)$ of $Q(\C)$ either disjoint, or intersect by a cube of $Q(\C)$. This property enables us to indeed call $Q(\C)$ a cube \emph{complex}, as opposed to it being just a collection of cubes. 
\end{itemize}
\begin{figure}[hbt]
\centering
\begin{tikzpicture}

\begin{scope}[scale=2, xshift=-1.5cm]  
    \coordinate (O) at (0,0,0);
    \coordinate (A) at (0,1,0);
    \coordinate (B) at (1,0,0);
    \coordinate (C) at (1,1,0);
    \coordinate (D) at (0,0,1);

    \draw[fill=black] (O) circle (1pt) node[anchor=east] {(0,0,0)};
    \draw[fill=black] (A) circle (1pt) node[anchor=east] {(0,1,0)};
    \draw[fill=black] (B) circle (1pt) node[anchor=west] {(1,0,0)};
    \draw[fill=black] (C) circle (1pt) node[anchor=west] {(1,1,0)};
    \draw[fill=black] (D) circle (1pt) node[anchor=east] {(0,0,1)};

    \draw[dotted] (O) -- (A) -- (C) -- (B) -- cycle;
    \draw[dotted] (O) -- (D);
\end{scope}

\begin{scope}[scale=2, xshift=1.5cm]  
    \coordinate (O) at (0,0,0);
    \coordinate (A) at (0,1,0);
    \coordinate (B) at (1,0,0);
    \coordinate (C) at (1,1,0);
    \coordinate (D) at (0,0,1);

    \draw[fill=black] (O) circle (1pt) node[anchor=east] {(0,0,0)};
    \draw[fill=black] (A) circle (1pt) node[anchor=east] {(0,1,0)};
    \draw[fill=black] (B) circle (1pt) node[anchor=west] {(1,0,0)};
    \draw[fill=black] (C) circle (1pt) node[anchor=west] {(1,1,0)};
    \draw[fill=black] (D) circle (1pt) node[anchor=east] {(0,0,1)};

    \draw[fill=gray] (O) -- (A) -- (C) -- (B) -- cycle;
    \draw[fill=gray] (O) -- (D);
\end{scope}

\draw[->, thick] (0.5,1) -- (1.5,1);

\end{tikzpicture}
\caption{\small A $2$-dimensional illustration of the cube complex for the extremal class $\C = \{000, 010, 110, 100, 001\}$. It has $5$ vertices ($0$-dimensional cubes), $5$ edges ($1$-dimensional cubes), and $1$ square ($2$-dimensional cube). The square corresponds to a cube $(Y, f)$ of $\C$ with $Y=\{1,2\}$ and $f\colon 3\mapsto 0$. A unique maximal edge, connecting $(0,0,0)$ and $(0,0,1)$, corresponds to a cube with $Y=\{3\}$ and $f\colon 1\mapsto 0, 2\mapsto 0$.}\label{fig:completion}
\end{figure}
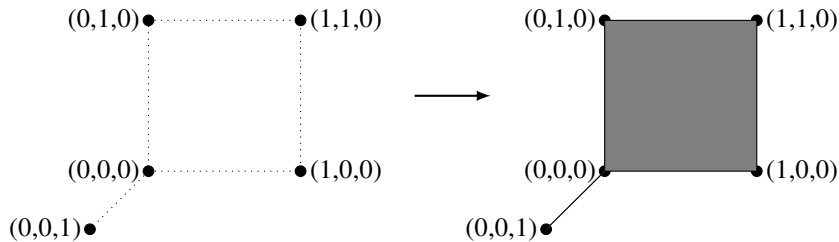

For $Y\subseteq [n]$, we say that $Y$ is \emph{strongly shattered} by $\C$ if there is a cube $(Y, f)$ in $\C$. It is well-known that for extremal classes strong shattering is equivalent to shattering; see, for example, \cite{Bollobas:95}.\footnote{The authors use the terms \emph{strongly traced} and \emph{traced}.} It then follows that for an extremal $\C$, $\dim(Q) = \vc(\C)$. 

A key concept that we utilize in the proof is the \emph{Radon number} of a topological space.
Denote by $\Delta^{d}$ the standard $d$-dimensional simplex (with $d+1$ vertices), and by $\partial \Delta^{d}$ its boundary, which is homeomorphic to the $(d-1)$-dimensional sphere $S^{d-1} \subset \R^d$.

\begin{definition}[Radon number of a topological space]
	For a topological space $X$, the \emph{Radon number} $\radon(X)$ is the largest integer $d$ such that there is a continuous mapping $f\colon \partial \Delta^{d+1} \rightarrow X$ such that the images of disjoint faces are disjoint.
The map $f$ is called a \emph{Radon map}.
\end{definition}

Alternatively, $\radon(X)+1$ is the minimal integer $d$ such that for every continuous map $f\colon \partial \Delta^{d+1} \rightarrow X$, there is a pair of disjoint faces whose images intersect. The name ``Radon number'' is motivated by the famous Topological Radon Theorem. 

\begin{theorem}[Topological Radon Theorem~\citep*{Bajmoczy:79}]\label{t:toprad}
$$\radon(\R^d) = d-1.$$
\end{theorem}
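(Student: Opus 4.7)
The plan is to prove both inequalities separately, where $\Delta^k$ denotes the standard $k$-dimensional simplex on $k+1$ vertices.

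For the lower bound $\radon(\R^d) \geq d-1$, I would exhibit an explicit Radon map $f\colon \partial\Delta^d \to \R^d$ as follows. Pick $d+1$ affinely independent points $p_0, \ldots, p_d \in \R^d$, send the $i$-th vertex of $\Delta^d$ to $p_i$, and extend simplicially (linearly on each face). Disjoint faces of $\partial\Delta^d$ correspond to disjoint subsets $I, J \subseteq \{0,\ldots,d\}$, whose images are the convex hulls $\conv(\{p_i : i \in I\})$ and $\conv(\{p_j : j \in J\})$. These are disjoint: otherwise the classical Radon theorem would produce a Radon partition of the points $p_i$, contradicting their affine independence.

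For the upper bound $\radon(\R^d) \leq d-1$, I would establish the Topological Radon Theorem via Borsuk--Ulam. Fix a continuous $f\colon \partial\Delta^{d+1} \to \R^d$ and assume for contradiction that every pair of disjoint faces of $\Delta^{d+1}$ has disjoint $f$-images. Consider the deleted $2$-fold join $D = (\Delta^{d+1})^{\ast 2}_\Delta$, whose points may be written $tx \oplus (1-t) y$ for $x, y$ in disjoint nonempty faces of $\Delta^{d+1}$ and $t \in [0,1]$. The space $D$ carries a free $\Z_2$-action $(x, y, t) \mapsto (y, x, 1-t)$; a standard identification realizes $D$ as the boundary of the $(d+2)$-dimensional cross-polytope, giving a $\Z_2$-equivariant homeomorphism $D \cong S^{d+1}$ with the antipodal action. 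Define $g\colon D \to \R^{d+1}$ by
\[
g\bigl(tx \oplus (1-t) y\bigr) \;=\; \bigl(\, t f(x) - (1-t) f(y),\; 2t - 1 \,\bigr).
\]
A direct check shows $g(y, x, 1-t) = -g(x, y, t)$, so $g$ is $\Z_2$-equivariant with the antipodal action on $\R^{d+1}$. If $g$ vanishes at some point, then $2t - 1 = 0$ forces $t = 1/2$ and $f(x) = f(y)$, producing intersecting images of the disjoint faces containing $x$ and $y$, a contradiction. Hence $g$ avoids the origin, and normalization yields a continuous $\Z_2$-equivariant map $S^{d+1} \to S^d$, contradicting the Borsuk--Ulam theorem.

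The main obstacle I foresee is the equivariant identification $D \cong S^{d+1}$. This is a classical fact in topological combinatorics --- the deleted $2$-fold join of an $n$-simplex is equivariantly homeomorphic to $S^n$ with the factor-swap action corresponding to the antipodal one --- but stating it cleanly requires carefully tracking the join/cross-polytope correspondence and checking that the two $\Z_2$-actions agree. Everything else in the argument is then a routine application of Borsuk--Ulam.
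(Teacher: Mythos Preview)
Your argument is sound in outline, but one point needs patching. The deleted join $D = (\Delta^{d+1})^{*2}_\Delta$ contains, among its top-dimensional faces, the two copies of the \emph{full} simplex $\Delta^{d+1}$ (each paired with the empty face on the other side); these are exactly the facets of the cross-polytope that use only positive or only negative vertices. Your parameterization, with $x$ and $y$ in disjoint \emph{nonempty} faces, forces both faces to be proper and therefore misses the interiors of those two cells. Hence your formula for $g$, which calls $f(x)$ and $f(y)$, is not defined on all of $D \cong S^{d+1}$ as written, because $f$ lives only on $\partial\Delta^{d+1}$. The fix is immediate: extend $f$ continuously to all of $\Delta^{d+1}$ (possible since $\R^d$ is contractible, or just by coning from the barycenter) and use the extension in the formula for $g$. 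Equivariance is preserved because the involution swaps the two full-simplex faces and the same extension is used on each; the zero-set analysis is unaffected since any zero of $g$ has $t = 1/2$, where both $x$ and $y$ lie in proper faces and only the original $f$ is invoked.

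Regarding the comparison: the paper does not give a standalone proof of this theorem --- it is quoted as a known result --- but the surrounding text indicates a different derivation, due to Guilbault, that bypasses the deleted-join machinery entirely. One constructs an explicit continuous map $\lambda_d \colon S^d \to \partial\Delta^{d+1}$ that sends every pair of antipodal points into disjoint faces (Proposition~\ref{prop:guilbault}); then any Radon map $f \colon \partial\Delta^{d+1} \to \R^d$ yields, by precomposition, a continuous $f \circ \lambda_d \colon S^d \to \R^d$ that never identifies antipodes, contradicting the classical Borsuk--Ulam theorem. That route is more elementary --- no equivariant identification of a join with a sphere is required --- and it works natively with $\partial\Delta^{d+1}$, so the extension issue above never arises. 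Your configuration-space argument, by contrast, is the textbook proof and has the advantage of generalizing cleanly to Tverberg-type statements.
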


Our proof consists of two main steps, whose statements are now easy to formulate. First, we prove that for an extremal class $\C$ the ``topological'' Radon number $\radon(Q(\C))$ can be lower-bounded by the ``discrete'' Radon number $\radon(\C)$.
\begin{proposition}\label{prop:embed}
    Let $\C$ be an extremal class, and let $c_1, \ldots, c_k \in \C$ be Radon independent. Then, there exists a Radon map $f\colon \partial \Delta^{k-1} \rightarrow Q(\C)$. 
    In particular, $\radon(Q(\C)) \geq \radon(\C) - 2$.
\end{proposition}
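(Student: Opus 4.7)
The plan is to geometrically realize the abstract convex-hull structure of the Radon-independent points $c_1,\ldots,c_k$ inside $Q(\C)$ and then build the Radon map by induction on the skeleton of $\partial \Delta^{k-1}$. For each $I \subseteq [k]$, let $X_I \subseteq [n]$ be the set of coordinates on which all $c_i$ ($i\in I$) agree, and let $t_I\colon X_I \to \{0,1\}$ be their common restriction, so that $\conv(\{c_i : i \in I\}) = \C_{X_I,t_I}$. Associate to each $I$ the subcomplex
$$K_I \ :=\ \bigcup\{(Y,f)_Q \ :\ Y\cap X_I = \emptyset\ \text{and}\ f|_{X_I}=t_I\}\ \subseteq\ Q(\C),$$
which plays the role of a ``topological convex hull'' of $\{c_i : i\in I\}$ inside $Q(\C)$.

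I would then verify three structural properties of the family $\{K_I\}_{I\subseteq [k]}$. First (\emph{nesting}), if $J\subseteq I$ then $X_J\supseteq X_I$ and $t_J|_{X_I}=t_I$, which immediately gives $K_J\subseteq K_I$; in particular $c_i\in K_{\{i\}}\subseteq K_I$ for every $i\in I$. Second (\emph{disjointness}), if $I\cap J=\emptyset$ then $K_I\cap K_J=\emptyset$: any point in the intersection lies in some cube $(Y,f)_Q$ whose free set $Y$ avoids both $X_I$ and $X_J$ and whose $f$ agrees with $t_I$ on $X_I$ and with $t_J$ on $X_J$; any concept witnessing this cube then lies in $\C_{X_I,t_I}\cap \C_{X_J,t_J}$, contradicting the Radon-independence of $c_1,\ldots,c_k$. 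Third (\emph{contractibility}), $K_I$ is naturally homeomorphic to the cube complex $Q(\C_{X_I,t_I})$ of the slice, viewed as a class on $[n]\setminus X_I$ after projecting out the frozen coordinates. Since slices of extremal classes are extremal, and the cube complex of a nonempty extremal class is contractible — in fact collapsible, by the standard theory of ample classes — $K_I$ is contractible whenever $I\neq\emptyset$.

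With these properties in hand, I would build $f\colon \partial \Delta^{k-1}\to Q(\C)$ by induction on skeleton dimension, maintaining the invariant that the face $\sigma_I$ spanned by $\{v_i : i\in I\}$ is mapped into $K_I$. On vertices set $f(v_i)=c_i\in K_{\{i\}}$. For $I\subsetneq [k]$ with $|I|\geq 2$, once $f$ is defined on $\partial \sigma_I$ with each proper subface $\sigma_J$ landing in $K_J\subseteq K_I$, the boundary map takes values in the contractible space $K_I$ and extends continuously to all of $\sigma_I$. Disjointness then guarantees that images of disjoint faces are disjoint, so $f$ is a Radon map. Applying the construction with $k=\radon(\C)$ yields $\radon(Q(\C))\geq k-2=\radon(\C)-2$, which is the quantitative consequence stated in the proposition.

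The main obstacle — and the only place where extremality is genuinely used — is the contractibility step. It rests on two nontrivial structural facts about extremal classes that I would cite from the literature on ample classes rather than re-derive: closure under slicing, and contractibility (collapsibility) of the associated cube complex. Once these are in hand, the nesting property is immediate, the disjointness property is a direct translation of Radon-independence, and the inductive extension is a standard obstruction-theory argument made trivial by contractible targets.
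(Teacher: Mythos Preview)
Your proposal is correct and follows essentially the same route as the paper. Your subcomplex $K_I$ is precisely what the paper calls $\conv_Q(C(s))=Q(\conv(\{c_i:i\in I\}))$, your contractibility step is exactly Proposition~\ref{prop:conv} (citing the same two facts from \cite{Chalopin:22}: closure of extremal classes under slices $\C_{X,t}$ and collapsibility of $Q(\C)$), and the inductive skeleton-by-skeleton extension and disjointness-via-a-common-vertex argument match the paper's proof almost line for line. One cosmetic point: in your disjointness step, the claim that a point of $K_I\cap K_J$ lies in a \emph{single} cube satisfying both constraint sets tacitly uses that two cubes of $Q(\C)$ intersect in a cube of $Q(\C)$; the paper sidesteps this by noting that $K_I\cap K_J$ is itself a cube subcomplex and hence, if nonempty, contains a vertex.
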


The second step upper bounds the Radon number of an arbitrary simplicial complex in terms of its dimension.

\begin{lemma}\label{lem:radon-for-sc}
	For a finite $d$-dimensional simplicial complex $K$, we have $d-1\leq \radon(K)\leq 2d-1$. 
\end{lemma}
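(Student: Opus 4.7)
For the lower bound $d-1 \le \radon(K)$, my plan is direct. Since $\dim(K)=d$, the complex contains some $d$-simplex $\sigma$, and the inclusion of its boundary $\partial\sigma\hookrightarrow K$ (combinatorially a copy of $\partial\Delta^d$) is itself a Radon map: disjoint faces of the simplex $\sigma$ are disjoint as subsets of $\sigma$, hence as subsets of $K$. This witnesses $\radon(K)\ge d-1$.

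For the upper bound $\radon(K) \le 2d-1$, the plan is to deploy the topological method based on variants of Topological Radon. The core rests on two classical ingredients. The first is the \emph{Menger--N\"obeling embedding theorem}, which guarantees that any finite $d$-dimensional simplicial complex admits a topological embedding $\iota\colon K \hookrightarrow \R^{2d+1}$. The second is the Topological Radon Theorem (\Cref{t:toprad}), which gives $\radon(\R^n) = n-1$. Composing any Radon map $f\colon \partial\Delta^{m+1} \to K$ with $\iota$ yields a Radon map $\iota\circ f\colon \partial\Delta^{m+1}\to\R^{2d+1}$ (injectivity of $\iota$ preserves disjointness of images), and applying \Cref{t:toprad} to the composition forces $m\le 2d$. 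This already gives the weaker bound $\radon(K)\le 2d$.

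The main obstacle, and the heart of the proof, is sharpening this by one to reach $2d-1$. The plain argument above is wasteful, because it treats $\iota(K)$ as if it filled the ambient $\R^{2d+1}$, whereas in fact $\iota(K)$ occupies only a $d$-dimensional subset. My plan is to remedy this by invoking a variant of Topological Radon that is sensitive to the low-dimensional structure of $K$: a $\mathbb{Z}/2$-equivariant argument on deleted products (or deleted joins). A hypothetical Radon map $f\colon \partial\Delta^{2d+1}\to K$ would induce a $\mathbb{Z}/2$-equivariant map between the deleted product of $\partial\Delta^{2d+1}$ (whose equivariant type is governed by a sphere of the appropriate dimension) and the deleted product of $K$. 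A Borsuk--Ulam-style index computation, exploiting the $d$-dimensionality of $K$ --- for instance via Haefliger--Weber-type obstruction theory or a topological Van Kampen--Flores argument --- should then produce the desired contradiction. Establishing this sharper index bound for the deleted product of a $d$-dimensional simplicial complex is the delicate step I expect to be the main technical difficulty.
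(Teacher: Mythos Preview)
Your lower bound argument is exactly the paper's: the boundary of a top-dimensional simplex of $K$ is itself a Radon map witnessing $\radon(K)\ge d-1$. Your derivation of the weaker bound $\radon(K)\le 2d$ via Menger--N\"obeling and \Cref{t:toprad} is also correct, and the paper mentions precisely this argument as a warm-up.

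For the sharp bound $\radon(K)\le 2d-1$, your outline is headed in the right direction but is both vaguer and heavier than necessary. The paper avoids developing any deleted-product or Haefliger--Weber machinery from scratch by citing two ready-made results. First, \cite{guilbault:10} constructs, for each $k$, a continuous map $\lambda_k\colon S^k\to\partial\Delta^{k+1}$ that sends antipodal points into disjoint faces. Second, a Borsuk--Ulam-type theorem of \cite{jaw:02} states that any continuous map $S^k\to K$ with $K$ a finite $d$-dimensional simplicial complex and $k\ge 2d$ must identify some antipodal pair. Given a Radon map $f\colon\partial\Delta^{k+1}\to K$, the composition $g=f\circ\lambda_k\colon S^k\to K$ sends no antipodal pair to the same point (antipodes land in disjoint faces, whose $f$-images are disjoint), so Jaworowski's theorem forces $k\le 2d-1$.

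Your $\mathbb{Z}/2$-equivariant deleted-product plan is essentially the engine behind Jaworowski's theorem, so what you propose would work if carried out; but the paper's route is shorter because it packages the hard step as a black-box citation and reduces the problem to a two-line composition argument. The one concrete idea you are missing is Guilbault's map $\lambda_k$, which converts the Radon condition on $\partial\Delta^{k+1}$ into the antipodal condition on $S^k$ and thereby lets one invoke Borsuk--Ulam-type results directly, rather than working with deleted products of $\partial\Delta^{k+1}$.
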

The slightly weaker upper bound $\radon(K) \leq 2d$ can be derived as a consequence of the Topological Radon Theorem. Indeed, by the geometric realization theorem, $K$ can be embedded in $\R^{2d+1}$. Thus, any Radon map into $K$ 
yields a Radon map into $\R^{2d+1}$, and so $\radon(K) \leq \radon(\R^{2d+1}) = 2d$.

The required bound can be obtained by combining Proposition~\ref{prop:embed} and Lemma~\ref{lem:radon-for-sc} as follows. Let $Q^t(\C)$ be a \emph{triangulation} of $Q$. That is, $Q^t$ is a finite simplicial complex such a) every simplex of $Q^t$ is contained in a cube of $Q$ of the same dimension, and b) every cube of $Q$ is a finite union of simplices of $Q^t$ of the same dimension. In particular, $\dim(Q) = \dim(Q^t)$ and $Q$ and $Q^t$ are homeomorphic as topological spaces, which implies $\radon(Q) = \radon(Q^t)$. It is well-known and easy to check that some triangulation $Q^t$ of $Q$ exists, see for example \cite*{triangulations}. Then
\begin{align*}
	\radon(\C) &\leq \radon(Q(\C)) + 2 = \radon(Q^t(\C)) + 2 
		\\&\leq 2\dim(Q^t(\C)) - 1 + 2 = 2\dim(Q(\C)) - 1 + 2= 2 \vc(\C) + 1.
\end{align*}

To complete the proof, we thus need to prove Proposition~\ref{prop:embed} and Lemma~\ref{lem:radon-for-sc}. For the proof of Proposition~\ref{prop:embed}, let us first the following property of $Q(\C)$.
Recall that a topological space $Y$ is $k$-connected if for every $0\leq \ell \leq k$, any continuous map from an $\ell$-dimensional sphere $S^\ell$ into $Y$ can be extended to a continuous map from the $(\ell+1)$-dimensional ball $B^{\ell+1}$ into $Y$.
For $k=1$, for example,
this means that every copy of a circle in $Y$ can be extended to a copy of a disc in $Y$.
Figuratively, we can freely deflate balloons inside $Y$.

\begin{proposition}\label{prop:conv}
	Let $\C$ be an extremal class, and let $Y\subseteq \C$ be an arbitrary non-empty convex, with respect to the convexity space $\F_\C$, subset of $\C$. Then $Q(Y)\subseteq Q(\C)$ is $k$-connected for every $k\geq 0$.
\end{proposition}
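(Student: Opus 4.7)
The plan is to prove the stronger statement that $Q(Y)$ is contractible for every non-empty convex $Y$. Since any contractible space is $k$-connected for every $k \geq 0$, this immediately implies the proposition.

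First, a reduction: since $Y$ is convex in $\F_\C$, we have $Y = \C_{X,t}$ for some $X \subseteq [n]$ and $t \colon X \to \{0,1\}$. It is a standard fact about extremal classes (see e.g.\ \cite{Bollobas:95}) that such fibers of an extremal class are themselves extremal. Thus it suffices to prove: for every non-empty extremal class $Y \subseteq \{0,1\}^n$, the cube complex $Q(Y)$ is contractible.

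I would proceed by induction on $n$, with $n = 0$ giving a single point. For the inductive step, pick a coordinate $i \in [n]$ on which $Y$ is non-constant (if none exists then $|Y| = 1$ and $Q(Y)$ is a point). Define
$$Y_0 = \{c \in Y : c(i) = 0\}, \qquad Y_1 = \{c \in Y : c(i) = 1\},$$
and let $Y_{01} \subseteq \{0,1\}^{[n] \setminus \{i\}}$ be the reduction consisting of those sequences $s$ such that both $s \cup \{i \mapsto 0\}$ and $s \cup \{i \mapsto 1\}$ lie in $Y$. The classes $Y_0$ and $Y_1$ are half-spaces of $Y$ and hence extremal by the first reduction, while $Y_{01}$ is extremal by the closure of extremal classes under reduction (discussed below). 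Geometrically, $Q(Y)$ decomposes as the pushout
$$Q(Y) \;=\; Q(Y_0) \;\cup\; \bigl(Q(Y_{01}) \times [0,1]\bigr) \;\cup\; Q(Y_1),$$
where $Q(Y_0)$ lies in the face $x_i = 0$, $Q(Y_1)$ in $x_i = 1$, and the cylinder interpolates in the $i$-direction, attached along the subcomplexes $Q(Y_{01}) \times \{0\} \subseteq Q(Y_0)$ and $Q(Y_{01}) \times \{1\} \subseteq Q(Y_1)$. By the induction hypothesis all three pieces are contractible, so the pushout $Q(Y)$ is contractible as well; the attaching maps are inclusions of CW subcomplexes, hence cofibrations, so contractibility is preserved.

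The main technical step is establishing extremality of the reduction $Y_{01}$. A clean counting proof goes as follows. Since the projection $\pi$ onto $[n] \setminus \{i\}$ is injective on $Y_0$ and on $Y_1$, we have $|Y| = |Y_0| + |Y_1| = |\pi(Y_0)| + |\pi(Y_1)| = |Y_{01}| + |\pi(Y)|$, where the last equality uses $|\pi(Y_0) \cap \pi(Y_1)| = |Y_{01}|$ and $|\pi(Y_0) \cup \pi(Y_1)| = |\pi(Y)|$. On the shattering side, shattered sets of $Y$ partition by whether they contain $i$: those containing $i$ are in bijection with shattered sets of $Y_{01}$, and those not containing $i$ are the shattered sets of $\pi(Y)$, giving $|\str(Y)| = |\str(Y_{01})| + |\str(\pi(Y))|$. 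Pajor's inequality applied to $Y_{01}$ and $\pi(Y)$, together with the extremality of $Y$, forces equality in both Pajor bounds, so $Y_{01}$ (and $\pi(Y)$) are extremal. With this in hand, the rest of the induction is routine topology.
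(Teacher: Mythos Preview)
Your argument is correct and takes a different route from the paper. The paper's proof is essentially a two-line citation of \cite{Chalopin:22}: first, that every fiber $\C_{X,t}$ of an extremal class is itself extremal (which you also invoke), and second, that the cube complex of any extremal class is \emph{collapsible}, hence contractible and therefore $k$-connected for all $k$. You instead supply a direct inductive proof of contractibility via the half-space decomposition $Q(Y) = Q(Y_0) \cup (Q(Y_{01}) \times [0,1]) \cup Q(Y_1)$, in effect reproving (the contractibility part of) the cited result from scratch. Your approach is more elementary and self-contained; the paper's is shorter but black-boxes the harder collapsibility theorem. Two minor points worth tightening: the bijection you assert between $\{S \in \str(Y) : i \in S\}$ and $\str(Y_{01})$ fails for general $Y$ and for extremal $Y$ requires the strong-shattering characterization --- but in fact you only need the easy injection $S' \mapsto S' \cup \{i\}$, which gives $|\str(Y_{01})| + |\str(\pi(Y))| \leq |\str(Y)|$ and already suffices for the sandwich argument; and you should note that $Y_{01} \neq \emptyset$ whenever $Y_0, Y_1$ are both non-empty (this drops out of the same counting, since $\{i\} \in \str(Y)$ would otherwise be unaccounted for), as otherwise the glued space is disconnected and the pushout step fails.
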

\begin{proof}
The proof relies on the following two results from \cite*{Chalopin:22} that emphasize the highly connected nature of extremal classes (the authors use the term \emph{ample} for extremal):
\begin{itemize}
	\item If $\C$ is extremal then every $\C_{X, t}$ is extremal.
	\item If $\C$ is extremal then the cubical complex $Q(\C)$ is \emph{collapsible}.
\end{itemize}
In \cite{Chalopin:22}, the first bullet is Theorem~3.1~(6), but it is also a well-known property of extremal classes. The second is their Proposition~4.12, and, to our knowledge, it was not known before that. We refer the reader to the referenced paper for the definition of collapsibility; here we only use the standard fact\footnote{\cite{Chalopin:22}, e.g., say that 
collapsibility is a stronger version of contractibility.
It is well-known that a contraction of $Y$ enables to extend any continuous map  $S^\ell \cong \partial B^{\ell+1} \to Y$ to a continuous map $B^{\ell+1} \to Y$. Roughly speaking, the extension is constructed by emulating inside $Y$ the contraction of $S^\ell$ to a point in $B^{\ell+1}$ using the contraction of $Y$.} that collapsible spaces are $k$-connected for all $k$.  

Now let us take a convex set $Y\subseteq \C$. By Definition~\ref{def-conv-space} and the discussion afterwards, $Y = \C_{X, t}$ for some $X\subseteq [n]$ and $t\colon X\rightarrow \{0,1\}$. Then by the first item above, $Y$ itself is extremal. And by the second item, $Q(Y)$ is collapsible, and hence $k$-connected for all $k$.
\end{proof}

\begin{proof}(Of Proposition~\ref{prop:embed}). Let $c_1, \ldots, c_k \in \C$ be Radon independent,
and let $\Delta = \Delta^{k-1}$.
Recall that the abstract simplices of $\partial\Delta$ are precisely non-empty subsets of $[k]$.
We are going to define $f\colon \partial\Delta \rightarrow Q = Q(\C)$  such that for every simplex $s$ with vertex set $V(s)$ in $\partial\Delta^{k-1}$, 
$$f(s) \subseteq \conv_{Q} (C(s)),$$ where $C(s) = \{c_i: i\in V(s)\}$ and $\conv_Q(W) = Q(\conv(W)) \subseteq Q(\C)$, for $W\subseteq \C$.

The map $f$ is defined inductively. 
The $\ell$-skeleton of $\partial \Delta$ comprises simplices of dimension at most $\ell$.
We construct $f$ first on the $0$-skeleton, then on the $1$-skeleton, and so forth. 
For $\ell=0$, we need to define $f$ on $V(\partial \Delta) = [k]$. Naturally, we do it by putting $f(i) = c_i$, for $i\in [k]$. Note that here we use the fact that $\C$ is embedded into $Q(\C)$, that is, each $c_i\in \C \subseteq \{0,1\}^n$ is treated as a point in $Q(\C)\subseteq [0,1]^n$. Then trivially, for every $i\in [k]$,  $f(i)\in \conv_{Q} (c_i)$, as needed.

Now, assuming that $f$ is defined on the $\ell$-skeleton and we need to define it on the $(\ell+1)$-skeleton.  By the induction hypothesis, for every maximal $(\ell+1)$-dimensional simplex $s$, the map $f$ is already defined on its $\ell$-skeleton $\partial s$. Moreover, for each $\ell$-dimensional simplex $s'$ in $s$, $$f(s')\subseteq \conv_{Q} (C(s')) \subseteq \conv_{Q} (C(s)).$$ As $s\cong B^{\ell+1}$ and $\partial s \cong S^\ell$, applying Proposition~\ref{prop:conv} to $s$ with $Y = \conv (C(s))$ gives an extension of $f$ to $s$ with $f(s)\subseteq \conv_{Q} (C(s))$.

This concludes the construction of the map $f$. Let us take any disjoint simplices $s_1$ and $s_2$ of $\partial\Delta^{k-1}$. In particular, $V(s_1)$ and $ V(s_2)$ are disjoint. As $c_1, \dots, c_k$ are Radon independent, the convex sets $\conv(C(s_1))$ and $\conv(C(s_2))$ are disjoint. 

We claim that then $Q_1 = \conv_{Q}(C(s_1))$ and $Q_2 = \conv_{Q}(C(s_2))$ are also disjoint. Indeed, both $Q_1$ and $Q_2$ are cube subcomplexes of $Q(\C)$, and hence their intersection $Q_1\cap Q_2$ is also a cube subcomplex of $Q(\C)$. Hence, if $Q_1\cap Q_2$ is nonempty, it contains a $0$-dimensional cube, that is, a vertex, $p$. But vertices of a cube complex of a class are in bijective correspondence with the elements of this class, that is, $p\in Q_1 =  Q(\conv(C(s_1)))$ if and only if $p\in \conv(C(s_1))$, and similarly, $p\in \conv(C(s_1))$, a contradiction.

Thus, the images $f(s_1)$ and $f(s_2)$ are disjoint for any disjoint simplices $s_1$ and $s_2$ of $\partial\Delta^{k-1}$, and hence $f$ is Radon, as needed.
\end{proof}

Finally, let us prove Lemma~\ref{lem:radon-for-sc}. 
Here we rely on two known results. 
The first is proved by \cite*{guilbault:10} in his simple proof of the Topological Radon Theorem using the Borsuk-Ulam Theorem. 
\begin{proposition}[\cite{guilbault:10}, Proposition 3.1]\label{prop:guilbault}
	For every $k\geq 0$, there is a continuous function $\lambda_k\colon S^k \rightarrow \partial \Delta^{k+1}$ such that images of the antipodal points lie in disjoint simplices.	
\end{proposition}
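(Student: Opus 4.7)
The plan is to construct $\lambda_k$ explicitly by embedding $\Delta^{k+1}$ symmetrically in Euclidean space and using inner products with the vertices to build a map with the antipodal-disjoint property. First, I realize $\partial \Delta^{k+1}$ in $\R^{k+1}$ with vertices $v_0, v_1, \ldots, v_{k+1}$ placed so that their centroid is the origin, i.e.\ $\sum_{i=0}^{k+1} v_i = 0$; these $k+2$ vectors in $\R^{k+1}$ necessarily span $\R^{k+1}$, since the simplex is non-degenerate. Then, for $x \in S^k \subseteq \R^{k+1}$, set $P(x) := \{i : \langle x, v_i \rangle > 0\}$ and define
\[
\lambda_k(x) \;:=\; \frac{\sum_{i=0}^{k+1} [\langle x, v_i \rangle]_+\, v_i}{\sum_{i=0}^{k+1} [\langle x, v_i \rangle]_+},
\]
where $[t]_+ := \max(t, 0)$. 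By construction $\lambda_k(x)$ is a convex combination of $\{v_i : i \in P(x)\}$ with strictly positive coefficients, so it lies in the relative interior of the face $\conv(\{v_i : i \in P(x)\})$.

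The verification splits into three steps. First, \emph{denominator positivity and continuity:} since $\sum_i \langle x, v_i \rangle = \langle x, \sum_i v_i \rangle = 0$, either all inner products vanish---impossible for $x \neq 0$ because $\{v_i\}$ spans $\R^{k+1}$---or both strictly positive and strictly negative values occur. Thus $D(x) := \sum_i [\langle x, v_i \rangle]_+$ is strictly positive on $S^k$, which makes $\lambda_k$ well-defined and continuous. Second, \emph{the image lands in $\partial \Delta^{k+1}$:} the same argument shows $P(x)$ is a proper non-empty subset of $\{0,\ldots,k+1\}$, so $\conv(\{v_i : i \in P(x)\})$ is a proper face of $\Delta^{k+1}$, hence a simplex of $\partial \Delta^{k+1}$. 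Third, \emph{antipodal-disjoint images:} observe that $P(-x) = \{i : \langle x, v_i \rangle < 0\}$ is disjoint from $P(x)$, so $\lambda_k(x)$ and $\lambda_k(-x)$ lie in simplices of $\partial \Delta^{k+1}$ spanned by disjoint vertex sets, which are therefore disjoint as subsets of $\partial \Delta^{k+1}$.

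The main conceptual move is exploiting the symmetry $\sum_i v_i = 0$ to guarantee that the sign pattern of $(\langle x, v_i \rangle)_i$ splits into a non-trivial positive and negative part for every $x \in S^k$; once this is in place, continuity, the image being on $\partial \Delta^{k+1}$, and the antipodal property all follow immediately from the definition of $[\cdot]_+$. I do not anticipate any serious obstacle---this is a direct linear-algebraic construction that avoids both Borsuk--Ulam and any induction on $k$, and the only pointwise fact to verify is that a non-zero vector in $\R^{k+1}$ cannot be orthogonal to all $k+2$ vertices of a non-degenerate simplex.
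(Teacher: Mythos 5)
Your construction is correct, and it is worth noting that the paper does not prove this statement at all: Proposition~\ref{prop:guilbault} is quoted verbatim from \cite{guilbault:10}, so your argument is a self-contained replacement for an external citation rather than a variant of a proof given in the text. The construction itself is sound: placing the vertices $v_0,\dots,v_{k+1}$ of a non-degenerate $(k+1)$-simplex in $\R^{k+1}$ with $\sum_i v_i=0$ makes the affine hull a linear subspace, so the $v_i$ span $\R^{k+1}$; hence for $x\in S^k$ the numbers $\langle x,v_i\rangle$ are not all zero, and since they sum to zero the index sets $P(x)=\{i:\langle x,v_i\rangle>0\}$ and $P(-x)=\{i:\langle x,v_i\rangle<0\}$ are both non-empty, proper, and disjoint. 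Your map $\lambda_k(x)$ is then a continuous point of the proper face spanned by $P(x)$ (continuity is clear because $[\,\cdot\,]_+$ is continuous and the denominator is positive on the compact sphere), and faces of a simplex spanned by disjoint vertex sets are disjoint because barycentric coordinates are unique; this is exactly the property the paper needs when it composes $\lambda_k$ with a Radon map in the proof of Lemma~\ref{lem:radon-for-sc}. Compared with Guilbault's treatment, which the paper invokes as a black box, your linear-algebraic formula has the advantage of avoiding any simplicial subdivision or inductive construction, and it makes the antipodal disjointness visible in one line from $P(-x)\cap P(x)=\emptyset$; the only price is the (easily verified) requirement that the simplex be realized with its centroid at the origin.
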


The second result is a variant of the Borsuk-Ulan Theorem for general simplicial complexes
that was proven by \cite*{jaw:02}.

\begin{theorem} [\cite{jaw:02}] \label{thm:Jaw}
Let $K$ be a finite simplicial complex of dimension $d$. If $f \colon S^k \to K^d$ is continuous and $2d \leq k$, then there exists $x \in S^k$ so that $f(x) = f(-x)$. 
\end{theorem}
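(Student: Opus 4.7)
The plan is to reduce the claim to a version of the Borsuk--Ulam theorem where the target sphere is effectively replaced by a subspace of dimension $2d-1$, producing exactly the one-unit gain needed to reach the tight bound $k\geq 2d$ rather than the naive bound $k\geq 2d+1$.

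First, I would embed $K$ as a PL subcomplex of $\R^{2d+1}$ via the classical Menger--N\"obeling embedding: place the vertices of $K$ in general position in $\R^{2d+1}$ and extend linearly on each simplex. Suppose for contradiction that $f(x)\neq f(-x)$ for every $x\in S^k$. Then
\[
\hat g\colon S^k\to S^{2d},\qquad \hat g(x)=\frac{f(x)-f(-x)}{\lVert f(x)-f(-x)\rVert},
\]
is a continuous odd (antipode-preserving, i.e.\ $\mathbb{Z}/2$-equivariant) map. A direct application of the classical Borsuk--Ulam theorem to $\hat g$ produces a contradiction only when $k\geq 2d+1$, so it is off by exactly one.

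The improvement comes from observing that $\hat g$ is constrained to land in the $\mathbb{Z}/2$-invariant subset
\[
Z=\bigl\{(a-b)/\lVert a-b\rVert : a,b\in K,\ a\neq b\bigr\}\subseteq S^{2d}.
\]
The key claim is $\dim Z\leq 2d-1$. Indeed, the Minkowski difference $K-K\subseteq\R^{2d+1}$ is a finite union of PL polytopes $\sigma-\tau$ over pairs of simplices $\sigma,\tau$ of $K$, so it is a bounded PL set of dimension at most $\dim(K\times K)=2d$ containing the origin. The set $Z$ is the image of $(K-K)\setminus\{0\}$ under radial projection, or equivalently the link at the origin of the tangent cone of $K-K$; the link of a PL cone of dimension $\leq 2d$ in $\R^{2d+1}$ is a PL set of dimension $\leq 2d-1$. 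Choosing the Menger--N\"obeling embedding generically ensures that the tangent cone is not degenerate and that the radial projection actually drops dimension. Note that $Z$ is antipodally symmetric and $\hat g$ is equivariant into it.

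Finally, I would invoke the following general equivariant principle: any continuous $\mathbb{Z}/2$-equivariant map from $S^k$ (free antipodal action) to a free $\mathbb{Z}/2$-space of covering dimension at most $n$ forces $k\leq n$. This can be established by approximating $Z$ by a finite free $\mathbb{Z}/2$-CW complex of dimension $\leq n$ and running cellular obstruction theory (equivalently, by descending $\hat g$ to $\mathbb{RP}^k\to Z/\mathbb{Z}_2$ and comparing powers of the top mod-$2$ Stiefel--Whitney class), and it is the same flavor of equivariant topology underlying the proof of Proposition~\ref{prop:guilbault}. Applied to $\hat g\colon S^k\to Z$ with $\dim Z\leq 2d-1$, this yields $k\leq 2d-1$, contradicting $k\geq 2d$.

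The principal obstacle is the dimension bound for $Z$ in the third paragraph: radial projection can in principle fail to drop dimension, so one must exploit the PL structure and boundedness of $K-K$ and choose the embedding generically. Once this is in hand, the remainder is routine equivariant topology, and the improvement from the naive $2d+1$ bound to the tight $2d$ bound corresponds precisely to the one unit of dimension lost in passing from $(K-K)\setminus\{0\}\subseteq\R^{2d+1}$ to its radial projection $Z\subseteq S^{2d}$.
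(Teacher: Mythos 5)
The paper itself gives no proof of \Cref{thm:Jaw} (it is quoted from \cite{jaw:02}), so your argument must stand on its own, and it has a genuine gap at its central claim $\dim Z\leq 2d-1$. The set $Z$ is \emph{not} the link at the origin of the tangent cone of $K-K$: that link only records directions of pairs $a,b$ that are infinitesimally close, whereas $Z$ contains the directions of \emph{all} pairs $a\neq b$, in particular of pairs lying in two vertex-disjoint $d$-simplices $\sigma,\tau$ of $K$. For such a pair, $\{(a-b)/\lVert a-b\rVert : a\in\sigma,\ b\in\tau\}$ is the radial projection of the $2d$-dimensional polytope $\sigma-\tau$, which avoids the origin; for a generic embedding the direction spaces of $\sigma$ and $\tau$ meet trivially and the origin misses the affine hull of $\sigma-\tau$, so the radial projection is injective on it and its image is a $2d$-dimensional subset of $S^{2d}$. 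Thus genericity works against you: whenever $K$ contains two disjoint $d$-simplices, a generic Menger--N\"obeling embedding gives $\dim Z=2d$, not $2d-1$.

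Nor can a special (non-generic) embedding rescue the bound. The direction map $(a,b)\mapsto (a-b)/\lVert a-b\rVert$ is a $\Z_2$-equivariant map from the deleted product $K\times K\setminus\Delta$ (with the swap involution) onto $Z$ (with the antipodal involution), and $Z$, being an antipodally invariant subset of $S^{2d}$, is a free $\Z_2$-space; so if $\dim Z\leq 2d-1$ held, the $\Z_2$-index of the deleted product would be at most $2d-1$. But for the van Kampen--Flores complex (the $d$-skeleton of the $(2d+2)$-dimensional simplex, a finite $d$-complex) the deleted product has $\Z_2$-index at least $2d$ --- this is precisely the standard reason it does not embed in $\R^{2d}$. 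Hence for that $K$, \emph{every} embedding into $\R^{2d+1}$ yields $\dim Z\geq 2d$, and your concluding coindex-versus-dimension principle (which is correct) only returns the vacuous bound $k\leq 2d$. The one unit of dimension you hope to gain by radial projection is therefore unavailable in general; a correct proof has to engage with the equivariant topology of the map $x\mapsto (f(x),f(-x))$ into the deleted product (as in \cite{jaw:02}), rather than with a dimension count of the set of difference directions.
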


\begin{proof}(Of Lemma~\ref{lem:radon-for-sc}).
The lower bound $d-1\leq \radon(K)$ holds because $K$ contains a $d$-dimensional simplex. 
For the upper bound $\radon(K)\leq 2d-1$, 
we use the two results above.
Given a Radon map 
$f\colon \partial \Delta^{k+1} \rightarrow K$, we can use the map $\lambda_k$ to get a \emph{Borsuk-Ulam map} $g = f\circ \lambda_k$. That is, the map $g\colon S^k\rightarrow \R^d$ does not collapse antipodal points. 
By \Cref{thm:Jaw},
we can deduce that
$k \leq 2d-1$.
\end{proof}


\section{Examples}\label{sec:examples}

We now give examples that demonstrate that our bounds relating the Radon number, dual VC dimension, and VC dimension are nearly tight. These examples reveal that the lower bounds in Theorems~\ref{t:b}, \ref{t:c}, and \ref{t:d} are tight, while the upper bounds are tight within constant multiplicative factors. The determination of more precise bounds is left open for future research. This is summarized in \Cref{tab:bounds} below. The detailed analysis of the examples appears in \Cref{sec:proof}.

\begin{table}[ht]
\centering
\begin{tabular}{|l|c|c|c|}
\hline
             & \begin{tabular}[c]{@{}c@{}}\Cref{t:b} \\ (VC vs.\ dual VC)\end{tabular} & \begin{tabular}[c]{@{}c@{}}\Cref{t:c} \\ (VC vs.\ Radon)\end{tabular} & \begin{tabular}[c]{@{}c@{}}\Cref{t:d} \\ (dual VC vs.\ Radon)\end{tabular} \\ \hline
Upper Bound  
    & {\small Examples~\ref{ex:d=1} and~\ref{ex:hyperplanes}} 
    & {\small Examples~\ref{ex:ball}, \ref{ex:d=1}, and~\ref{ex:hyperplanes}}  
    & \small{Example~\ref{ex:ball}}          
\\ \hline
Lower Bound  
    & {\small Example~\ref{ex:cube}} 
    & {\small Examples~\ref{ex:cube}, \ref{ex:ball}, and~\ref{ex:hyperplanes}} 
    & {\small Examples~\ref{ex:cube} and~\ref{ex:hyperplanes}}  \\ \hline
\end{tabular}
\caption{Summary of examples demonstrating the tightness of our bounds. Our lower bounds are tight, while the upper bounds are tight within constant multiplicative factors.}
\label{tab:bounds}
\end{table}

Our initial example demonstrates that for non-extremal concept classes $\C$, the Radon number cannot be upper-bounded solely as a function of the VC or dual VC dimensions. (However, the lower bound by the dual VC dimension remains valid even for non-extremal classes, as stated in \Cref{t:d}.)

\begin{example}[Singletons]\label{ex:singletons}
   The class of singletons \(\C=\{1_{\{i\}} : i\in [n]\}\)
   is not extremal, and
    \begin{enumerate}
        \item \(\vc(\C)=1\).
        \item \(\vc^\star(\C)=1\),
        \item \(\radon(\C)=n\).
    \end{enumerate}
    This illustrates that for non-extremal classes, the Radon number can be unbounded while both the VC and dual VC dimensions remain as low as $1$.
\end{example}

The upcoming examples pertain to Theorems~\ref{t:b}, \ref{t:c}, and~\ref{t:d}. Refer to Table~\ref{tab:bounds} for a roadmap linking each example to its theorem.

\begin{example}[Cube]\label{ex:cube}
    The cube $\C=\{0,1\}^{d}$ is a maximum class satisfying:
    \begin{enumerate}
        \item $\vc(\C)=d$,
        \item $\vc^\star(\C)=\lfloor \log d \rfloor$,
        \item $\radon(\C)= \lfloor \log(2d+2)\rfloor$.
    \end{enumerate}
    This demonstrates the tightness of the first lower bound in \Cref{t:c} and the lower bound in \Cref{t:b}.
\end{example}

\begin{example}[Dented cube]\label{ex:ball}
The dented cube $\C$
is obtained by removing the all-$1$ function from the cube $\{0,1\}^{d+1}$.
It is a maximum class satisfying:
\begin{enumerate}
    \item \(\vc(\C)=d\),
    \item $\vc^\star(\C)=\lfloor\log(d+1)\rfloor$,
    \item $\radon(\C)=d+1$.
\end{enumerate}
This example demonstrates that the second lower bound in \Cref{t:c} is tight. Additionally, it shows that the upper bounds in \Cref{t:c} and \Cref{t:d} are tight within a constant multiplicative factor.
\end{example}

\begin{example}\label{ex:d=1}
Let 
\begin{align*}
\C  = \{01010101, & \allowbreak 11010101, \allowbreak \bb{10010101}, 
 		\allowbreak 01110101, \allowbreak \\ & \bb{01100101}, 
 		\allowbreak 01011101, \allowbreak \bb{01011001}, 
 		\allowbreak 01010111, \allowbreak \bb{01010110}\}.
\end{align*}
This class is maximum and satisfies:
\begin{enumerate}
    \item $\vc(\C)=1$,
    \item $\vc^\star(\C)=3$ (any three out of the four blue concepts are dually shattered),
    \item $\radon(\C)=3$.
\end{enumerate}
This confirms that the upper bounds in \Cref{t:b} and \Cref{t:c} are tight for $d=1$.
\end{example}

\subsection{Hyperplane Arrangements}
We conclude with a geometric example of linear classifiers in $\mathbb{R}^d$.
Consider a collection of hyperplanes $H_1,\ldots, H_n\subseteq \mathbb{R}^d$
where each $H_i$ is defined as $H_i = \{x : \langle a_i, x \rangle = b_i\}$ with $a_i\in\mathbb{R}^d$ and $b_i\in\mathbb{R}$.
For each point $x\in\mathbb{R}^d\setminus(\cup_{i=1}^n H_i)$, associate a sign vector $c_x\in\{\pm \}^n$ defined by:
\[c_x(i) = 
\begin{cases}
+ & \text{if } \langle a_i, x \rangle > b_i,\\
- & \text{if } \langle a_i, x \rangle < b_i.
\end{cases}\]
Define the class $\C=\{c_x : x\in\mathbb{R}^d\}$. 
Each $c\in\C$ corresponds to a cell induced by the hyperplane arrangement (see Figure~\ref{fig:hyperplane} for an illustration).
We say that the hyperplanes are generic if
for every subset of hyperplanes $\{H_{i_1},\ldots, H_{i_k}\}$, the intersection $\cap_{j=1}^kH_{i_j}$ has co-dimension $k$
(and every $d+1$ hyperplanes
do not intersect).
For every $d,n$,
there are generic hyperplane arrangements, and as the name suggests a ``typical'' arrangement is generic. 
If the hyperplanes are generic, the corresponding $\C$ is a maximum class of VC dimension $d$~\citep{Moran:12}.

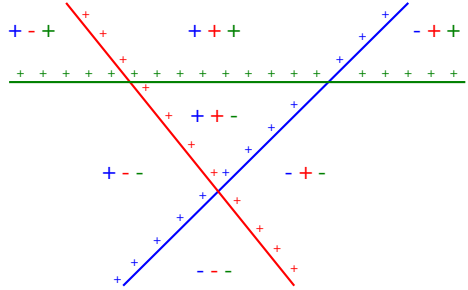
\begin{figure}[hbt]
\centering
\begin{tikzpicture}[scale=1.5]

\draw[color=blue, thick] (-0.5, -0.5) -- (2, 2); 
\draw[color=red, thick] (-1, 2) -- (1, -0.5); 
\draw[color=darkgreen, thick] (-1.5, 1.3) -- (2.5, 1.3); 

\foreach \x in {-0.55,-0.4,-0.2,0,0.2,0.4,0.6,0.8,1,1.4,1.6,1.8} {
    \node[color=blue] at (\x, \x+0.1) {\tiny +};
}

\foreach \x in {-1.4,-1.2,-1.0,-0.8,-0.6,-0.4,-0.2,0,0.2,0.4,0.6,0.8,1,1.2,1.6,1.8,2.0,2.2,2.4} {
    \node[color=darkgreen] at (\x, 1.375) {\tiny +};
}

\foreach \point in {(-0.825,1.9), (-0.675, 1.725), (-0.5, 1.5), (-0.3, 1.25), (-0.1, 1.0), (0.1, 0.75), (0.3, 0.5), (0.5, 0.25), (0.7, 0), (0.85, -0.175), (1,-0.35)} {
    \node[color=red] at \point {\tiny +};
}

\node at (-0.5, 0.5) {\small \color{blue}+ \color{red}- \color{darkgreen}-};
\node at (0.3, -0.4) {\small \color{blue}- \color{red}- \color{darkgreen}-};
\node at (1.1, 0.5) {\small \color{blue}- \color{red}+ \color{darkgreen}-};
\node at (0.3, 1) {\small \color{blue}+ \color{red}+ \color{darkgreen}-};
\node at (0.3, 1.75) {\small \color{blue}+ \color{red}+ \color{darkgreen}+};
\node at (2.25, 1.75) {\small \color{blue}- \color{red}+ \color{darkgreen}+};
\node at (-1.3, 1.75) {\small \color{blue}+ \color{red}- \color{darkgreen}+};

\end{tikzpicture}
\caption{\small A $2$-dimensional hyperplane arrangement consisting of $3$ lines and $7$ cells. The class $\C$ 
is maximum.
Its VC dimension is two because $|\C|<8$.
Its dual VC dimension is one 
because $|\C^\star|<4$.
Its Radon number is three,
because the three concept
$+-+,-++,---$ are Radon indepedent.
}\label{fig:hyperplane}
\end{figure}

The convexity space $\F = \F_C$
corresponds to the standard notion of convexity. 
The half-space
$\C_{i,y}$ corresponds to the positive or negative, depending on $y$, half-space defined by $H_i$ in $\R^d$.
The classical Radon Theorem
implies that $\radon(\C) \leq d+1$ for all hyperplane arrangements in $\R^d$ (not necessarily generic). 
\Cref{t:d} says that
$\vc^\star(\C)\leq \radon(\C) \leq d+1$.

Take $d+1$ generic hyperplanes that support a simplex. 
The total number of cells we get is $2^{d+1}-1$.
The VC dimension is $d$.
The Radon number is $d+1$ because the cells obtained by flipping (with respect to the simplex) all hyperplanes but one are Radon independent. 
The dual VC dimension 
is however at most $\log (d+1)$.
In fact, we obtained a different description of \Cref{ex:ball}.

To make the dual VC dimension larger, let $p_1,\ldots,p_{d+1} \in \R^d$ be $d+1$ points that are shattered by hyperplanes
$h_1,\ldots,h_{2^{d+1}}$.
These points can be chosen, for example, as the vertices of a full-dimensional simplex.
The hyperplanes can be chosen to be generic so that we get a maximum class of VC dimension $d$.
By construction, the dual VC dimension is at least $d+1$,
and, as discussed above,
it is also at most $d+1$.


\begin{example}\label{ex:hyperplanes}
    There are maximum classes $\C$ defined by arrangements of generic hyperplanes in $\mathbb{R}^d$ fulfilling:
    \begin{enumerate}
        \item $\vc(\C)=d$,
        \item $\vc^\star(\C)=d+1$,
        \item $\radon(\C)= d+1$.
    \end{enumerate}
    This example demonstrates the tightness of the lower bounds in \Cref{t:c} and \Cref{t:d}, and that the upper bounds in \Cref{t:b} and \Cref{t:c} are tight within a factor of $2$.
\end{example}

A more general type of classes, called \emph{pseudogeometric range spaces}, was defined and studied in~\cite{Gartner:94}.
Intuitively, this class is an abstraction of the class obtained from generic hyperplanes in $\R^d$, with pseudo-hyperplanes instead of hyperplanes, where pseudo-hyperplanes
are topological abstractions of linear hyperplanes. 
Their definition, however, is combinatorial and is inductively defined based
on the following two ideas.
First, a one-dimensional pseudogeometric range space is a, combinatorially defined, line.
Second, for $d\geq 2$, a class is a $d$-dimensional pseudogeometric range space if its intersection with any pseudo-hyperplane from it is a $(d-1)$-dimensional pseudogeometric range space; here again, the intersection is defined combinatorially. 

Our work suggests a different way to define and think of pseudo-hyperplane arrangements. 
Let us first imagine a generic hyperplane arrangement realized in $\R^d$,
and denote by $\C$ its concept class. The cube complex $Q = Q(\C)$ can be thought of as a dual object to the geometric realization.
Each point in $Q$ is a cell in $\C$. Two adjacent cells are connected by an edge in $Q$,
four cells that meet at a point define a square, and so forth.
Thus, $Q$, as a topological space, can be embedded into $\R^d$. 

%

Based on this observation, it is natural to suggest that a maximum class $\C$ of VC dimension $d$ is a $d$-dimensional pseudogeometric range space in the sense of~\cite{Gartner:94} if and only if its cube complex $Q(\C)$ is embeddable into $\R^d$. This equivalence will be the topic of a future study.

\section{Proofs and Analysis of Examples}\label{sec:proof}

Here we analyze all examples in~\Cref{sec:examples}.
Some of the bounds in our main results are deduced as corollaries.

\begin{proposition}[First Lower Bound in Theorem~\ref{t:c}]\label{p:t-c-ub1}
	For any class $\C$, $$\lfloor\log(2\vc(\C)+2)\rfloor\leq \radon(\C).$$
\end{proposition}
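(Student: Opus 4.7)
The strategy is to interpret Radon-independence combinatorially and then use shattering to realize every partition pattern in a small number of coordinates.

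First, I would unpack what Radon-independence means in the convexity space $\F_\C$. Recall that $\conv(S) = \C_{X_S, t_S}$, where $X_S$ is the set of coordinates on which all elements of $S$ agree and $t_S$ records the common value. Thus two subsets $C_I, C_J \subseteq \C$ have disjoint convex hulls if and only if there is some coordinate $x$ and some value $y$ such that $C_I \subseteq \C_{x,y}$ and $C_J \subseteq \C_{x,1-y}$, i.e., a coordinate that is constant on $C_I$, constant on $C_J$, and takes opposite values. So a set $\{c_1,\dots,c_n\}\subseteq\C$ is Radon-independent if and only if every non-trivial unordered bipartition $\{I,J\}$ of $[n]$ is \emph{realized} by some coordinate $x$ in the sense that $(c_1(x),\dots,c_n(x))$ is the indicator vector (or its complement) of that partition.

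Next I would count: the number of non-trivial unordered bipartitions of $[n]$ is $2^{n-1}-1$. Set $n = \lfloor \log(2d+2)\rfloor$, where $d = \vc(\C)$; this is precisely the largest $n$ for which $2^{n-1}-1 \leq d$. Let $\{x_1,\dots,x_d\}$ be a set shattered by $\C$, and enumerate the $2^{n-1}-1$ non-trivial bipartitions of $[n]$ as patterns $p_1,\dots,p_{2^{n-1}-1} \in \{0,1\}^n$ (choosing one representative from each complementary pair). Since $2^{n-1}-1 \leq d$, I can use the shattering of $\{x_1,\dots,x_d\}$ to select concepts $c_1,\dots,c_n \in \C$ satisfying $c_i(x_j) = (p_j)_i$ for all $i \in [n]$ and all $j \leq 2^{n-1}-1$; the values on the remaining coordinates are immaterial.

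Finally I would verify Radon-independence. Given any non-trivial partition $I \sqcup J = [n]$, it corresponds to one of the patterns $p_j$ (or its complement) for a unique $j$, and by construction the coordinate $x_j$ realizes this partition: all $c_i$ with $i \in I$ take one value at $x_j$ and all $c_i$ with $i \in J$ take the opposite value. Hence $\conv(C_I)$ and $\conv(C_J)$ are separated by the complementary half-spaces $\C_{x_j,0}$ and $\C_{x_j,1}$. Therefore $\{c_1,\dots,c_n\}$ is Radon-independent, so $\radon(\C) \geq n = \lfloor\log(2d+2)\rfloor$, as required. The only potential subtlety is the bookkeeping in aligning the $n \leq \log(2d+2)$ bound with the count $2^{n-1}-1 \leq d$ of bipartitions; this is purely arithmetic and poses no real obstacle.
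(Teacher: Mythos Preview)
Your proof is correct and follows essentially the same approach as the paper: set $k=\lfloor\log(2d+2)\rfloor$, note that the $2^{k-1}-1\le d$ non-trivial unordered bipartitions of $[k]$ can each be assigned a distinct coordinate from a shattered $d$-set, choose $c_1,\dots,c_k$ realizing these patterns, and separate each partition by the corresponding half-space pair. One minor remark: the ``only if'' direction of your characterization of disjoint convex hulls is not true in general (for instance, in the singletons class of Example~\ref{ex:singletons} one can have disjoint hulls with no single separating coordinate), but your argument only invokes the ``if'' direction, so this does not affect the proof.
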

\begin{proof}
	Suppose $\vc(\C)=d$, and, without losing generality, assume that $\C$ shatters $[d]$. Let $k=\lfloor\log(2d+2)\rfloor$. In particular, $D = (2^k - 2)/2 \leq d$. We say that $(I, J)$ is a (non-trivial, ordered) partition of $[k]$ if $I, J\neq\emptyset$, $I\sqcup J = [k]$, and $1\in I$. Let $\I$ be a set of all such partitions; the last condition ensures that precisely one of $(I, J)$ and $(J, I)$ is in $\I$. It is easy to see that then $|\I| = (2^k -2)/2$, and let us take an arbitrary bijection $(I, J)\mapsto x_{I, J}$ from $\I$ to $[D]$. 
	
	Let us now take $c_i\in \C$, for $i\in [k]$, such that for every $(I, J)\in \I$, $c_i(x_{I, J}) = 0$ if $i\in I$ and $1$ if $i\in J$. Such $c_i$ exist because $\C$ shatters $[D]\subseteq [d]$. We claim that the concepts $c_1, \dots, c_k$ are Radon independent. Indeed, let $(I, J) \in \I$ be a partition of $[k]$, and let $x=x_{I, J}$. Then $c_i(x) = 0$ for all $i\in I$, and $c_j(x) = 1$ for all $j\in J$. But then $I\subseteq \C_{x, 0}$ and $J\subseteq \C_{x, 1}$, which implies that $\conv_\C(I)\cap \conv_\C(J) \subseteq \C_{x, 0} \cap \C_{x, 1} = \emptyset$, as needed. 
\end{proof}

\begin{proposition}[Example~\ref{ex:cube} - Cube]\label{p:ex-cube}
    For the cube $\C=\{0,1\}^{d}$:
    \begin{enumerate}
        \item $\vc(\C)=d$,
        \item $\vc^\star(\C)=\lfloor \log d \rfloor$,
        \item $\radon(\C)= \lfloor \log(2d+2)\rfloor$.
    \end{enumerate}
\end{proposition}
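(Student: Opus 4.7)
My plan is to handle each of the three claims separately, leveraging the fact that for the cube $\C = \{0,1\}^d$ every binary pattern on $[d]$ occurs as a concept, so convex hulls are described purely by coordinate-wise agreement.

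Part (1) is immediate: $[d]$ is shattered by $\C$, and no larger set is available, so $\vc(\C) = d$.

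For part (2), observe that $\C^\star = \{f_x : x \in [d]\}$, where each $f_x\colon \C \to \{0,1\}$ is coordinate projection. A tuple $c_1,\dots,c_k \in \C$ is shattered by $\C^\star$ iff for every $T \subseteq [k]$ there is a coordinate $x_T \in [d]$ with $c_i(x_T) = 1$ exactly when $i \in T$. Each coordinate realizes a single value-pattern on $(c_1,\dots,c_k)$, so the $2^k$ required patterns force $d \geq 2^k$; hence $k \leq \lfloor\log d\rfloor$. Conversely, when $d \geq 2^k$ the freedom of the cube lets us select $c_1,\dots,c_k$ whose columns in the $k\times d$ matrix $(c_i(x))$ enumerate all $2^k$ patterns, witnessing the matching lower bound.

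For part (3), the lower bound $\lfloor\log(2d+2)\rfloor \leq \radon(\C)$ is exactly Proposition~\ref{p:t-c-ub1} applied to $\C$ (which trivially shatters $[d]$). For the matching upper bound, I will first give an explicit description of convex hulls in the cube: a concept $c$ lies in $\conv(\{c_i : i \in I\})$ iff at every coordinate $x$ where the $c_i$ with $i \in I$ all agree, $c$ takes that common value. It follows that for a non-trivial partition $(I,J)$ of $[k]$, the hulls $\conv(\{c_i : i \in I\})$ and $\conv(\{c_j : j \in J\})$ are disjoint iff some coordinate $x$ \emph{separates} them, meaning $c_i(x) = a$ for all $i \in I$ and $c_j(x) = 1-a$ for all $j \in J$. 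The counting step is the main combinatorial observation: each coordinate $x$ separates at most one unordered bipartition of $[k]$ (namely the one induced by the values $c_1(x),\dots,c_k(x)$, when both values appear), while Radon independence of $k$ concepts requires all $2^{k-1}-1$ unordered non-trivial bipartitions of $[k]$ to be separated. Pigeonholing gives $d \geq 2^{k-1} - 1$, which rearranges to $k \leq \lfloor\log(2d+2)\rfloor$.

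I do not anticipate a serious obstacle: each part reduces to counting realizable binary patterns, and part (3) is only marginally more involved because it requires first translating the abstract convex-hull and Radon-independence definitions into the concrete coordinate-agreement conditions before applying the bipartition pigeonhole. A small bookkeeping point to verify carefully is that the inequality $2^{k-1} - 1 \leq d$ indeed corresponds to the claimed floor $\lfloor \log(2d+2)\rfloor$.
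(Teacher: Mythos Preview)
Your proposal is correct and follows essentially the same route as the paper: both use Proposition~\ref{p:t-c-ub1} for the lower bound in part~(3), and for the upper bound both establish the same separation characterization (disjoint hulls in the cube correspond to a coordinate splitting the two parts) and then pigeonhole the $2^{k-1}-1$ unordered bipartitions into the $d$ coordinates. The only cosmetic difference is that you first describe $\conv(\{c_i:i\in I\})$ explicitly before deriving the separation criterion, whereas the paper proves the criterion directly; your final inequality $d\geq 2^{k-1}-1\Leftrightarrow k\leq \lfloor\log(2d+2)\rfloor$ is exactly the paper's.
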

This example also confirms the last lower bound in Theorem~\ref{t:c}.
\begin{proof}
 The equality $\vc(\C)=d$ is trivial. The bound $\vc^\star(\C)=\lfloor \log d \rfloor$ is also trivial and known from the sharpness of the bound in Theorem~\ref{t:assouad}; see \cite{assouad:83}. 
From Proposition~\ref{p:t-c-ub1}, $\radon(\C)\geq \lfloor \log(2d+2)\rfloor$. It remains to prove that $\radon(\C) \leq \log(2d+2)$. The argument goes along the lines of the proof of Proposition~\ref{p:t-c-ub1}.

We first prove the followng technical claim: \emph{for $I, J\subseteq \C$, 
     $\conv_\C(I) \cap \conv_\C(J)=\emptyset$ if and only if there is $x\in [d]$ that \emph{separates} $I$ and $J$, that is, such that $c(x)=0$ for all $c\in I$ and $c(x)=1$ for all $c\in J$, or the other way round.}

	Indeed, if there is such $x$, then $\conv_\C(I)$ and $\conv_\C(J)$ are trivially disjoint. In the other direction, suppose $\conv_\C(I) \cap \conv_\C(J)=\emptyset$. As explained below Proposition~\ref{prop:conv}, every non-empty convex set $Y$ in $\F_\C$ is of the form $Y = \C_{X, t}$ for $X\subseteq [d]$ and $t\colon X\rightarrow \{0,1\}$. Write $\conv_\C(I) = \C_{X_I, t_I}$ and $\conv_\C(J) = \C_{X_J, t_J}$. As $\conv_\C(I)$ and $\conv_\C(J)$ are disjoint, there is $x\in X_I\cap X_J$ such that $t_I(x) = 0$ and $t_J(x) = 1$, or vice versa. But, by the definition of $\C_{X, t}$, this means that  $c(x)=0$ for all $c\in I$ and $c(x)=1$ for all $c\in J$, or the other way round, as needed.
	
	Now, suppose that $c_1, \dots, c_k \in \C$ are  Radon shattered. We see that for each non-trivial ordered partition $(I,J)$ of $[k]$, there is $x_{I, J} \in [d]$ that separates $\{c_i : i \in I\}$ and $\{c_j :  j \in J\}$. Moreover, it is easy to see that the map $(I, J)\mapsto x_{I, J}$ on $\I$ is one-to-one. As $|\I| = (2^k-2)/2$, it follows that $d \geq (2^k-2)/2$.
\end{proof}

\begin{proposition}[Lower Bound for Maximum Classes in Theorem~\ref{t:c}]\label{p:t-c-ub2}
For any maximum class $\C\neq \{0,1\}^n$, we have $\vc(\C)+1\leq \radon(\C)$.
\end{proposition}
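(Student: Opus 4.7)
The plan is to construct $d+1$ Radon-independent concepts in $\C$ by exhibiting a \emph{dented-cube} structure inside $\C$: a point $c^* \in \{0,1\}^n \setminus \C$ together with a $(d+1)$-subset $Y \subseteq [n]$ such that every Hamming neighbor $c_y := c^* \oplus e_y$ (for $y \in Y$) lies in $\C$. Since $\C$ is maximum and $\C \neq \{0,1\}^n$, we have $n > d = \vc(\C)$ and the set $\bar{\C} := \{0,1\}^n \setminus \C$ is nonempty, so candidate $c^*$'s are available; the real issue is pairing $c^*$ with a suitable $Y$.

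Given such a pair, Radon-independence of $c_1, \ldots, c_{d+1}$ follows by the same argument that establishes $\radon \geq d+1$ for the dented cube (see \Cref{ex:ball}). For any non-trivial partition $I \sqcup J = Y$, each $c_y$ with $y \in I$ agrees with $c^*$ on every coordinate except $y$; in particular, all of $\{c_y : y \in I\}$ take the common value $c^*(x)$ on every coordinate $x \in ([n] \setminus Y) \cup J$, and symmetrically for $J$. Therefore
\[
    \conv_\C(\{c_y : y \in I\}) \;\cap\; \conv_\C(\{c_y : y \in J\}) \;\subseteq\; \bigcap_{x \in [n]} \C_{x,\, c^*(x)} \;=\; \{c^*\} \cap \C \;=\; \emptyset,
\]
so the convex hulls are disjoint for every non-trivial partition, which yields Radon-independence.

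The main obstacle is producing the pair $(c^*, Y)$, and this is where the \emph{maximum} (and not merely extremal) hypothesis enters. The structural input is a local-to-global characterization of maximum classes: for each $(d+1)$-subset $Y' \subseteq [n]$, the projection $\C|_{Y'}$ is itself a maximum class on $Y'$ of VC dimension $d$, so $|\C|_{Y'}| = 2^{d+1}-1$ and $\C|_{Y'}$ omits a unique pattern $g_0(Y')$; moreover, $c \in \C$ iff $c|_{Y'} \neq g_0(Y')$ for every such $Y'$. This attaches to each $c^* \notin \C$ a nonempty \emph{obstruction set} $O(c^*) := \{Y' : c^*|_{Y'} = g_0(Y')\}$, and a direct case analysis of a Hamming flip shows that $c^* \oplus e_y \in \C$ iff every member of $O(c^*)$ contains $y$ and the flip introduces no new obstruction at any $(d+1)$-set through $y$. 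The task therefore reduces to choosing $c^* \in \bar{\C}$ with $O(c^*) = \{Y\}$ for some fixed $(d+1)$-set $Y$ such that no coordinate flip in $Y$ creates a secondary obstruction. I expect this selection to be the delicate step, and plan to attack it via a minimality / extremality argument: pick $c^* \in \bar{\C}$ minimizing $|O(c^*)|$, exploit the identity $\sum_{c^* \in \bar{\C}} |O(c^*)| = \binom{n}{d+1} \cdot 2^{\,n-d-1}$ together with $|O(c^*)| \geq 1$ on $\bar{\C}$, and use the rigidity of the maximum structure to rule out stray secondary obstructions created by flipping coordinates inside the unique obstruction $Y$.
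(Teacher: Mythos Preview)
Your Radon-independence argument, assuming the pair $(c^*, Y)$ exists, is correct. The gap is that existence is not proven: you only sketch a plan (minimize $|O(c^*)|$, then rule out secondary obstructions), and neither step is carried out. The counting identity you cite does not force $\min |O(c^*)| = 1$ --- the average of $|O(c^*)|$ over $\{0,1\}^n \setminus \C$ can be large --- and even if some $c^*$ has $O(c^*) = \{Y\}$, nothing you have said prevents a flip $c^* \oplus e_y$ from hitting the forbidden pattern of some other $(d+1)$-set through $y$.

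More importantly, the requirement that the witnesses be Hamming neighbors of a single $c^*$ is unnecessary, and dropping it dissolves the difficulty entirely. Re-examine your own disjointness computation, restricted to coordinates in $Y$: for a partition $I \sqcup J = Y$ and any $x \in J$, every $c_y$ with $y \in I$ satisfies $c_y(x) = c^*(x)$ simply because $y \neq x$; hence $\conv_\C(\{c_y : y \in I\}) \subseteq \C_{J,\, t|_J}$ where $t := c^*|_Y$, and symmetrically for $J$, giving intersection contained in $\C_{Y, t} = \emptyset$. The values of the $c_y$ outside $Y$ are never used. So it suffices to find, for each $y \in Y$, \emph{any} concept $c_y \in \C$ whose trace on $Y$ equals $t_y$ (the forbidden trace with coordinate $y$ flipped); the $c_y$ may disagree arbitrarily off $Y$.

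That is exactly the paper's proof. Take any $(d+1)$-set $X$; since $\C$ is maximum with $\vc(\C)=d$, the set $X$ is minimal non-shattered, and the standard lemma for extremal classes yields a unique forbidden trace $t$ on $X$. Each $t_x$ is then realized by some $c_x \in \C$, with no control needed off $X$, and the Radon-independence argument above applies verbatim. No search for a global $c^*$, no obstruction sets, no minimization: any $(d+1)$-set works immediately.
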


We use the following known property of extremal classes (see \cite*{Chalopin:22} or \cite*{chornomaz:22}). For a class $\C\subseteq \{0,1\}^n$, a set $X\subseteq [n]$ is called \emph{minimal non-shattered} if $X$ is not shattered by $\C$ and $Y$ is shattered by $\C$ for every $Y\subsetneq X$. 

\begin{lemma}
    For every extremal class $\C$, if $X$ is minimal non-shattered by $\C$, then $X$ contains a unique forbidden trace $t$; that is, there is $t\colon X\rightarrow \{0, 1\}$ such that
\begin{itemize}
	\item For every $t'\colon X\rightarrow \{0, 1\}$ such that $t'\neq t$, there is $c\in \C$ such that $t'(x) = c(x)$ for all $x\in X$.
	\item There is no $c\in \C$ such that $t(x) = c(x)$ for all $x\in X$.
\end{itemize}
\end{lemma}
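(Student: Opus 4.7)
The plan is to reduce the claim to the following well-known closure property of extremal classes: if $\C$ is extremal, then the restriction $\C|_X = \{c|_X : c\in \C\}$ is also extremal (this is a classical fact about shattering-extremal/ample classes; see e.g.\ Bollob\'as-Radcliffe or the survey-style discussion in \cite{Chalopin:22}). Granted this, the result is essentially a counting argument on the restriction to $X$.

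First, I would set $k = |X|$ and analyze the shattered sets of $\C|_X$. For any $S \subseteq X$, note that $S$ is shattered by $\C|_X$ if and only if $S$ is shattered by $\C$, since shattering depends only on the coordinates in $S$. By the minimality of $X$, every proper subset $Y \subsetneq X$ is shattered by $\C$, hence by $\C|_X$, while $X$ itself is not shattered. Consequently,
\[
\str(\C|_X) = 2^X \setminus \{X\}, \qquad |\str(\C|_X)| = 2^k - 1.
\]

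Next, I would invoke the closure property: since $\C$ is extremal, so is $\C|_X$. Therefore Pajor's inequality holds with equality for $\C|_X$, giving
\[
|\C|_X| \;=\; |\str(\C|_X)| \;=\; 2^k - 1.
\]
Since $|\{0,1\}^X| = 2^k$, exactly one function $t \colon X \to \{0,1\}$ fails to extend to a concept in $\C$. This $t$ is precisely the unique forbidden trace.

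The only nontrivial ingredient is the closure of extremality under restriction; everything else is pure counting. If one wants a self-contained argument rather than citing the closure property, the cleanest route is to combine Pajor's inequality $|\C|_X| \le |\str(\C|_X)| = 2^k - 1$ (which immediately gives \emph{at least} one forbidden trace) with the identity $|\C| = \sum_{t \in \C|_X} |\C_{X,t}|$ together with the dual identity $|\str(\C)| = \sum_{t} |\str(\C_{X,t})|$ for extremal $\C$ (every shattered set splits uniquely across the fibers), forcing each fiber to itself be extremal and the number of nonempty fibers to equal $|\str(\C|_X)|$. The main obstacle is therefore verifying (or citing cleanly) that restriction preserves extremality; once that is in place, the lemma follows in one line.
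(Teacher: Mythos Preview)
The paper does not supply its own proof of this lemma; it is stated as a known property of extremal classes with citations to \cite{Chalopin:22} and \cite{chornomaz:22}. Your argument is correct and is, in fact, the standard route to this fact: restrict to $X$, note that the restriction $\C|_X$ is again extremal, observe that the shattered sets of $\C|_X$ are exactly the proper subsets of $X$ by minimality, and conclude $|\C|_X| = 2^{|X|}-1$ from Pajor equality, leaving a single missing trace. The only external input you invoke---closure of extremality under restriction---is precisely the kind of structural fact catalogued in the references the paper cites (and is elementary to verify directly via the Sandwich/``traces'' characterization of extremal classes). So your proposal fills in what the paper leaves to the literature, and does so along the expected lines; there is no gap.
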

\begin{proof}[Of Proposition \ref{p:t-c-ub2}] Let $\C\neq \{0,1\}^n$ be a maximum class of VC dimension $0<d<n$. 
Fix $X \subseteq [n]$ of size $|X| =d+1$.
Because $\C$ is maximum,
the set $X$ is minimal non-shattered. Let $t$ be its unique forbidden trace. Define a family of traces $t_x\colon X\rightarrow \{0,1\}$ for $x\in X$, as $t_x(y) = t(y)$ for $y\neq x$ and $t_x(x) = 1 - t(x)$. 
For each $x \in X$,
let $c_x\in \C$ be a concept witnessing trace $t_x$. 

We claim that the family $\{c_x: x\in X\}$ is Radon independent,
so that $\radon(\C)\geq d+1$.
Indeed, let $(I,J)$ be a non-trivial partition of $X$. 
By choice of $(c_i : i \in I)$,
$$\conv_\C(\{c_i: i\in I\}) \subseteq \C_{J, t|_J}$$ where $t|_J$ is a restriction of $t$ to $J$.
Similarly, $\conv_\C(\{c_j: j\in J\}) \subseteq \C_{I, t|_I}$. Hence, 
$$\conv_\C(\{c_i: i\in I\}) \cap \conv_\C(\{c_j: j\in J\}) \subseteq \C_{J, t_J} \cap \C_{I, t_I} = \C_{X, t}.$$ But $\C_{X, t}=\emptyset$ because $t$ is a forbidden trace of $\C$ on $X$.
\end{proof}

\begin{proposition}[Example~\ref{ex:ball} - Dented cube]\label{p:ex-ball}
Let $\C=\{0,1\}^{d+1} \setminus \{(1,1,\ldots,1)\}$ be the dented cube. Then,
    \begin{enumerate}
	    \item \(\vc(\C)=d\),
	    \item $\vc^\star(\C)=\lfloor\log(d+1)\rfloor$,
	    \item $\radon(\C)=d+1$.
    \end{enumerate}
\end{proposition}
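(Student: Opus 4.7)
The plan is to handle the three quantities in order; most of the effort concentrates on the last item, $\radon(\C) = d+1$.

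Parts (1) and (2) are quick. For (1), $[d+1]$ is not shattered since the all-ones labelling is absent, while $[d]$ is shattered by completing any labelling with a $0$ in coordinate $d+1$. For (2), $\C^\star$ has exactly $d+1$ elements, so shattering $k$ concepts requires $2^k$ distinct column patterns $(c_1(x),\ldots,c_k(x))$ to appear as $x$ ranges over $[d+1]$, yielding $\vc^\star(\C) \leq \lfloor\log(d+1)\rfloor$; the matching lower bound is a direct construction where $k = \lfloor\log(d+1)\rfloor$ concepts realise all $2^k$ patterns on some $2^k \leq d+1$ coordinates (the all-zero column pattern ensures no chosen concept is all-ones).

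The lower bound $\radon(\C) \geq d+1$ follows immediately from Proposition~\ref{p:t-c-ub2}: $\C$ is maximum, since $|\C| = 2^{d+1} - 1 = \binom{d+1}{\leq d}$, and $\C \neq \{0,1\}^{d+1}$, so the proposition yields $\radon(\C) \geq \vc(\C) + 1 = d+1$.

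The heart of the argument is the upper bound $\radon(\C) \leq d+1$: I will show that any $d+2$ distinct concepts $c_1, \ldots, c_{d+2} \in \C$ are Radon dependent. The idea is to import the classical real Radon theorem: viewing the $c_i$ as points in $\R^{d+1}$, obtain a partition $(I, J)$ of $[d+2]$ and a common point $p = \sum_{i \in I}\lambda_i c_i = \sum_{j \in J}\mu_j c_j$ in their real convex hulls. Passing to the positive-weight supports $I_+ \subseteq I$, $J_+ \subseteq J$, the crucial observation is that these supports are constant on the same coordinate set $X$ with the same trace $t$: whenever $I_+$ is constant on $x$ with value $v \in \{0,1\}$, the identity $v = p(x) = \sum_{j \in J_+}\mu_j c_j(x)$ with strictly positive $\mu_j$'s and $c_j(x) \in \{0,1\}$ forces every $c_j(x) = v$. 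Distinctness then rules out $X = [d+1]$, since otherwise $I_+$ and $J_+$ would both collapse to the single concept $t$, contradicting the disjointness and non-emptiness of $I_+, J_+$. Hence $|X| \leq d$, so $\{c \in \{0,1\}^{d+1} : c|_X = t\}$ has at least $2$ elements, and removing $(1,\ldots,1)$ leaves at least one witness in $\C$, giving $\conv_\C(I_+) \cap \conv_\C(J_+) \neq \emptyset$. Finally, extending $(I_+, J_+)$ arbitrarily to a partition of $[d+2]$ only enlarges the convex hulls in $\C$, so the full set of $d+2$ concepts is Radon dependent.

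The main obstacle is the mismatch between the support-level Radon partition supplied by the real theorem and the full partition of $d+2$ concepts required for Radon independence in $\C$; distinctness, together with monotonicity of $\conv_\C$ under inclusion, bridges this gap.
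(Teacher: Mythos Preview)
Your treatment of items (1), (2), and the lower bound in (3) is correct. The gap is in the upper bound $\radon(\C)\le d+1$.

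You invoke the classical Radon theorem on $d+2$ points in $\R^{d+1}$, but Radon in $\R^{n}$ requires $n+2$ points; in $\R^{d+1}$ you would need $d+3$. With only $d+2$ points the conclusion fails in general. Concretely, take $c_1=0$ and $c_{i+1}=e_i$ for $i=1,\dots,d+1$. These $d+2$ concepts lie in $\C$ and are affinely independent (vertices of a simplex), so for every nontrivial partition $(I,J)$ the Euclidean convex hulls are disjoint; your point $p$ does not exist and the rest of the argument cannot start. These concepts \emph{are} Radon dependent in $\F_\C$ --- with $I=\{1\}$ and $J=\{2,\dots,d+2\}$ one has $\conv_\C(\{e_1,\dots,e_{d+1}\})=\C$, since the $e_i$ are not constant on any coordinate, so the intersection contains $0$ --- but this is not what your Euclidean method produces.

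The paper's proof sidesteps Euclidean convexity entirely: assuming $c_1,\dots,c_{d+2}$ are Radon independent in $\F_\C$, for each $i$ the partition $(\{c_i\},\{c_j:j\ne i\})$ forces the existence of a coordinate $x_i$ on which $c_i$ disagrees with the common value of all other $c_j$; one then checks that $i\mapsto x_i$ is injective from $[d+2]$ into $[d+1]$, a contradiction. Your approach could be salvaged by handling the affinely independent case separately (and your argument does work whenever the $c_i$ are affinely dependent in $\R^{d+1}$), but as written it is off by one and leaves that case uncovered.
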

\begin{proof}
The first item is trivial and the second follows from Proposition~\ref{p:ex-cube}.
We only prove $\radon(\C)= d+1$.
Proposition~\ref{p:t-c-ub2} says that $\radon(\C)\geq d+1$. So we only need to show that $\radon(\C)\leq d+1$. Suppose, towards contradiction, that there are $d+2$ Radon shattered concepts $c_1, \dots, c_{d+2}\in \C$. Let us define partitions $I_i = \{c_i\}$ and $J_i = \{c_j: j\in [d+2] - \{i\}\}$ 
for $i\in [d+2]$. 
Write $\conv_\C(J_i) = \C_{X_i, t_i}$
for some $X_i \subseteq [n]$
and $t_i : X_i \to \{0,1\}$.
Because
$$\emptyset = \conv_\C(I_i) \cap \conv_\C(J_i) 
= I_i \cap  \C_{X_i, t_i}$$
there is $x_i\in X_i$ such that $c_i(x_i) \neq t_i(x_i)$
and $c_j(x_i) = t_i(x_i)$ for all $j \neq i$.
But this implies that the map $[d+2] \ni i \mapsto x_i \in [d+1]$ is one-to-one, which is a contradiction.
\end{proof}

\begin{proposition}[Example~\ref{ex:d=1}]\label{p:ex-d=1}
    For the class 
    \begin{align*}
\C  = \{01010101, & \allowbreak 11010101, \allowbreak \bb{10010101}, 
 		\allowbreak 01110101, \allowbreak \\ & \bb{01100101}, 
 		\allowbreak 01011101, \allowbreak \bb{01011001}, 
 		\allowbreak 01010111, \allowbreak \bb{01010110}\} ,
\end{align*}
 it holds that
    \begin{enumerate}
    \item $\vc(\C)=1$,
    \item $\vc^\star(\C)=3$ (any three out of the four blue concepts are dually shattered), 
    \item $\radon(\C)=3$.
    \end{enumerate}
\end{proposition}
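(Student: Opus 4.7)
The plan is to establish each of the three equalities by exploiting the block structure of $\C$. Write $c_0 = 01010101$ for the common ``center'' and, for $i \in \{1,2,3,4\}$, let $a_i$ and $b_i$ denote the two concepts that deviate from $c_0$ only inside the $i$th \emph{block} $\{2i-1, 2i\}$: $a_i$ flips $c_0$'s value at position $2i-1$ only, whereas $b_i$ flips $c_0$'s values at both $2i-1$ and $2i$. Thus $\C = \{c_0\} \cup \bigcup_{i=1}^{4}\{a_i, b_i\}$ has size $9 = \binom{8}{\leq 1}$, making $\C$ maximum. For $\vc(\C) = 1$ I first note that any singleton is shattered, then check that no pair $\{x, y\} \subseteq [8]$ is shattered: if $x, y$ lie in different blocks then $(1-c_0(x), 1-c_0(y))$ is unreachable, because every non-center concept agrees with $c_0$ outside one block; and if $x, y$ lie in the same block $i$, the only concepts contributing non-center values are $a_i$ and $b_i$, producing three of the four patterns.

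For $\vc^\star(\C) = 3$, the upper bound is a counting observation: dually shattering a set of $k$ concepts requires at least $2^k$ distinct coordinate-projections, and $n = 8 < 2^4$. For the lower bound I exhibit the three blue concepts $b_1 = 10010101$, $b_2 = 01100101$, $b_3 = 01011001$ and verify by direct enumeration that the tuples $(b_1(x), b_2(x), b_3(x))_{x \in [8]}$ exhaust all eight elements of $\{0,1\}^3$. The paper's general observation $\vc^\star(\C) \leq \radon(\C)$ then yields the lower bound $\radon(\C) \geq 3$.

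The remaining and main task is the upper bound $\radon(\C) \leq 3$. The key lemma is a one-line observation: \emph{if $c$ lies in block $i$ and $c'$ lies in block $j$ with $i \neq j$, then $c_0 \in \conv_\C(\{c, c'\})$}. Indeed, at every coordinate $x$ where $c$ and $c'$ agree, both must coincide with $c_0(x)$, since $c$ (resp.\ $c'$) can depart from $c_0$ only within block $i$ (resp.\ block $j$) and these blocks are disjoint. Hence $c_0$ satisfies every constraint defining the convex hull.

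Given the lemma, a brief case analysis shows that every $4$-element subset $\{c_1, c_2, c_3, c_4\} \subseteq \C$ is Radon dependent. If $c_0$ appears, say $c_1 = c_0$, then $c_2, c_3, c_4$ cannot all lie in one block (each block has only two non-center elements), so some pair among them is cross-block; the lemma then places $c_0$ in $\conv_\C(\{c_2, c_3, c_4\})$ and the partition $(\{c_0\}, \{c_2, c_3, c_4\})$ witnesses dependence. Otherwise all four concepts are non-center, and must fall into one of three configurations: four blocks represented; three blocks represented (shape $\{a_i, b_i, c_j, c_k\}$); or two blocks represented (necessarily $\{a_i, b_i, a_j, b_j\}$). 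I split them as, respectively, any two versus the other two; $(\{a_i, c_j\}, \{b_i, c_k\})$; and $(\{a_i, a_j\}, \{b_i, b_j\})$ — in every case both sides of the partition are cross-block pairs, so by the lemma $c_0$ lies in both convex hulls. The main obstacle is the last configuration: the ``natural'' partition $(\{a_i, b_i\}, \{a_j, b_j\})$ fails because intra-block pairs are rigid and their convex hulls miss $c_0$; the fix is to cross the blocks \emph{within} the partition itself, and modulo this trick the case analysis is routine.
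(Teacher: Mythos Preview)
Your proof is correct, but it takes a different route from the paper for the crucial upper bound $\radon(\C)\le 3$. The paper dispatches this in one line by invoking the main Theorem~C: since $\C$ is extremal (in fact maximum, as $|\C|=9=\binom{8}{\le 1}$) with $\vc(\C)=1$, the bound $\radon(\C)\le 2\vc(\C)+1=3$ is immediate. Combined with $3\le \vc^\star(\C)\le \radon(\C)$, everything follows.

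Your argument instead establishes $\radon(\C)\le 3$ by a direct combinatorial case analysis, via the lemma that any cross-block pair has $c_0$ in its convex hull. This is elementary and self-contained---it does not rely on the topological machinery (Topological Radon, collapsibility of the cube complex) underlying Theorem~C---at the cost of being specific to this one class. In the paper's context the one-line application of Theorem~C is more natural, since the whole purpose of the example is to witness that Theorem~C is sharp at $d=1$; but your direct computation has the independent value of confirming the example without circularity. Your treatment of $\vc(\C)=1$ and $\vc^\star(\C)=3$ matches the paper's (which merely asserts these can be verified and points out the symmetry), with the details filled in.
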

\begin{proof}
	It can be verified that $\vc(\C)= 1$ and that it indeed dually shatters any three out of the four blue concepts, witnessing $\vc^*(\C)\geq 3$. This analysis can be simplified by noticing that $\C$ is highly symmetric: it remains the same under permutations that exchange the pairs of even and odd vertices, for example, $01234567 \mapsto 23671045$.
The class $\C$ is maximum because $|\C| = 9 = {8 \choose \leq 1}$. Theorem~\ref{t:c} implies that $3 \leq \vc^\star(\C)\leq \radon(\C) \leq 2\vc(\C)+1 = 3$, and so $\vc^\star(\C)= \radon(\C)=3$, as claimed.
\end{proof}

\begin{proposition}[Theorem~\ref{t:d}]\label{p:t-d}
\begin{enumerate}
    \item 
Every concept class $\C$ satisfies $\vc^\star(\C)\leq \radon(\C)$. 
\item For general classes, $\radon(\C)$ is not upper-bounded by $\vc(\C)$ or $\vc^*(\C)$. 
\item If $\C$ is extremal, then $\radon(\C) \leq 2^{\vc^\star(\C)+2}-1$, and this bound is tight up to a multiplicative factor.
\end{enumerate}
\end{proposition}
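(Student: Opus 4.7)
My plan is to prove the three parts in sequence, each resting on material already established in the paper.

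For Part 1, the required implication is already sketched in the discussion immediately following \Cref{t:c}. If $c_1, \ldots, c_k \in \C$ are dually shattered, then for every non-trivial partition $I \sqcup J = [k]$ there exists a coordinate $x$ with $c_i(x) = 1$ for $i \in I$ and $c_j(x) = 0$ for $j \in J$; the complementary half-spaces $\C_{x,1}$ and $\C_{x,0}$ separate the two sets, forcing their convex hulls to be disjoint. Hence every dually shattered set is Radon-independent, so $\vc^\star(\C) \leq \radon(\C)$.

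For Part 2, I will use the singletons class $\C = \{\mathbf{1}_{\{i\}} : i \in [n]\}$ from \Cref{ex:singletons}. A direct calculation in $\F_\C$ shows that for any non-empty $I \subseteq [n]$, the half-spaces $\C_{x,0}$ for $x \notin I$ already cut the convex hull of $\{\mathbf{1}_{\{i\}} : i \in I\}$ down to itself. Consequently, disjoint subsets of $[n]$ yield disjoint convex hulls, and all $n$ singletons are Radon-independent. This gives $\radon(\C) = n$ while $\vc(\C) = \vc^\star(\C) = 1$, so neither VC nor dual VC dimension can bound $\radon$ for general classes.

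For Part 3, I will obtain the upper bound by chaining two already-proved inequalities: \Cref{t:c} gives $\radon(\C) \leq 2\vc(\C) + 1$ for extremal $\C$, while the left inequality in \Cref{t:assouad} rearranges to $\vc(\C) \leq 2^{\vc^\star(\C)+1} - 1$. Composing yields
\begin{equation*}
\radon(\C) \;\leq\; 2\vc(\C) + 1 \;\leq\; 2\bigl(2^{\vc^\star(\C)+1} - 1\bigr) + 1 \;=\; 2^{\vc^\star(\C)+2} - 1.
\end{equation*}
For tightness, the dented cube from \Cref{ex:ball} is extremal with $\vc^\star(\C) = \lfloor \log(d+1) \rfloor$ and $\radon(\C) = d+1$; writing $k = \vc^\star(\C)$ gives $\radon(\C) = d + 1 \geq 2^{k}$, so the dented cube matches the upper bound up to a multiplicative factor of at most $4$.

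There is no substantial obstacle in this proof, as every ingredient is either a direct half-space separation argument, an explicit example analysed in \Cref{sec:examples}, or a short chaining of \Cref{t:c} with the classical Assouad bound. The only place that benefits from a touch of care is the computation of $\radon$ for singletons in Part 2, but this follows cleanly from the explicit description of convex hulls in that convexity space together with the trivial upper bound $\radon(\C) \leq |\C|$.
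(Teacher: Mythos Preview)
Your proposal is correct and follows essentially the same approach as the paper: Part~1 cites the separation argument after \Cref{t:c}, Part~2 invokes the singletons class from \Cref{ex:singletons}, and Part~3 chains \Cref{t:c} with Assouad's bound and uses the dented cube for tightness. The only minor difference is that the paper picks the specific parameter $d+1=2^k-1$ in the dented cube to exhibit a factor-$2$ gap, whereas your generic choice yields a factor-$4$ gap; both suffice for the statement as written.
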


\begin{proof}
The fact that $\vc^\star(\C)\leq \radon(\C)$ was already proved in the discussion following Theorem~\ref{t:c}. The fact that $\radon(\C)$ is, in general, unbounded with respect to $\vc(\C)$ or $\vc^*(\C)$ is demonstrated by the class of singletons (Example~\ref{ex:singletons}).
For the third item, 
let $\C$ be extremal.  Theorem~\ref{t:c} says that $\radon(\C) \leq 2\vc(\C) + 1$. Theorem~\ref{t:assouad} says that $\vc(\C) \leq 2^{\vc^*(\C)+1} - 1$. Thus, 
$$\radon(\C) \leq 2\vc(\C) + 1 \leq 2(2^{\vc^*(\C)+1} - 1) + 1 = 2^{\vc^*(\C)+2} - 1.$$
The tightness up to a factor of $2$ is demonstrated by the dented cube (Example~\ref{ex:ball}) where $d+1=2^k - 1$. In this case, $\radon(\C) = 2^k - 1$ and $\vc^*(\C) = \lfloor \log(2^k -1)\rfloor = k-1$.
\end{proof}

\section{Discussion of the Main Proof}\label{sec:proof-discussion}
We conclude with a discussion of the Radon number, which is an interesting concept that merits a discussion in its own right. Most of the relevant concepts have been introduced in the main proof (\Cref{sec:overview}).

The classical Radon Theorem asserts that 
the maximal size of a Radon-independent set in \(\R^d\) is \(d+1\) with respect to the usual convexity structure. Equivalently, it says that
the maximal $k$ for which there is a \emph{linear} Radon map $f\colon \partial\Delta^{k+1}\rightarrow \R^d$ is $d-1$. The Topological Radon Theorem extends it by allowing \(f\) to be any continuous function. This opens the way to defining Radon numbers for arbitrary topological spaces, as we no longer depend on neither convexity nor linearity.

Other concepts can be formulated similarly. For example, we can define the \emph{Borsuk-Ulam number} $\bu(X)$ of a topological space $X$ as a maximal $d$ such that there is a Borsuk-Ulam map $f\colon S^d\rightarrow X$;
that is, a continuous map that does not collapse antipodal points. 
This way, \Cref{prop:guilbault}  proves that for all topological spaces $X$,
$$\radon(X)\leq\bu(X).$$ The Topological Radon theorem ($\radon(\R^d)=d-1$) becomes a consequence of the Borsuk-Ulam theorem ($\bu(\R^d)=d-1$), and Lemma~\ref{lem:radon-for-sc} ($\radon(K)\leq 2d-1$) a consequence of Theorem~\ref{thm:Jaw} ($\bu(K)\leq 2d-1$). It is natural to ask if the opposite direction holds.

\begin{open}
\label{q:4}
    Is it true that for an arbitrary topological space $X$, we have $\radon(S)=\bu(S)$?
    Is it true for the special case of simplicial complexes?
\end{open}

The question can be thought of as a rigidity property of antipodality. A positive answer says that the existence of a Borsuk-Ulam map implies the existence of a Radon map. The Borsuk-Ulam condition just requires that there is no collapse of antipodal points. The Radon condition requires e.g.\ that the image of a point is disjoint from all of its ``opposite faces''. 

In the one-dimensional case,
the answer seems to be positive. 
For the special case of simplicial complexes, 
if there is a Borsuk-Ulam map from the circle $S^1$ to a graph, 
then the graph is not a path (or a collection of disjoint paths).
Then in the graph, there is either a cycle or a vertex of degree at least three.
In both cases, there is a Radon map from $\partial \Delta^2$ to this graph. 
For general topological spaces $X$, consider the path space of $X$. Namely, the space of continuous maps $[0,1]\to X$.
If there are two paths whose union is connected but is not a path then there is a Radon map
(there is a point of ``degree three''). Otherwise, the space $X$ itself is a path, which is a contradiction to the existence of a Borsuk-Ulam map.

The argument for the one-dimensional case is different than the proof that a Radon map yields a Borsuk-Ulam map using
\Cref{prop:guilbault}.
Even for the one-dimensional case, there is no proof of this kind. That is, there is a one-dimensional space $X$ and a Borsuk-Ulam map $f:S^1 \to X$
so that for every continuous map
$g : \partial \Delta^2 \to S^1$,
the map $g \circ f$ is not Radon. 
The space $X$ can be chosen to be the circle $S^1$, and the map $f$ can be chosen to be the map that wraps around three times. 

In the Topological Radon Theorem, we can deal with two types of maps: $\partial \Delta^{d+1} \to X$ 
and $\Delta^{d+1}\to X$.
We used the first option,
but it also makes sense to use
the second.
If we denote this modified version of the Radon number by $\radon'(X)$, then $\radon'(X)\leq \radon(X)$ for all $X$.
For Euclidean space, we have $\radon'(\R^d) = \radon(\R^d)$ for all $d$, which could explain why this distinction is not made in the context of the Topological Radon Theorem. However, while $\radon(S^d) = \bu(S^d) = d$ as witnessed by the identity map, it seems that proving $\radon'(S^d)=d-1$ should not be too hard.

We were not able to find an \emph{explicit} definitions of topological Radon or Borsuk-Ulam numbers in the literature. They were, however, implicitly studied. For example, the aforementioned \cite{jaw:02} addressed $\bu(K)$, \cite{ja:93} studied the Borsuk-Ulam number of contractible simplicial complexes, and \cite{pergher:11} of products of manifolds. The latter paper, in fact, explicitly introduces a setup akin to (and even more general than) our definition of $\bu(S)$, and also has a good list of relevant references. The most systematic approach that we are aware of is developed in the book \cite{matousek} under the name of $\Z_2$-coindex of a space. 
This setup, nevertheless, is different because the target space is also required to have an involution, and the maps to be equivariant.

Finally, let us discuss the \emph{topological Radon number} of a class $\C$, defined as the Radon number of the cube complex $Q = Q(\C)$:
$$\radon_t(\C) = \radon(Q).$$ 
The bound 
$\radon(\C)\leq \radon_t(\C) + 2$ 
from Proposition~\ref{prop:embed} is specific to extremal classes. For example, in the (non-extremal) singletons class $\C$ in \Cref{ex:singletons}, the complex $Q$ is a collection of $n$ isolated points, and so $\radon_t(\C) = 0$ while $\radon(\C)=n$. 
The connection between $\radon$ and $\radon_t$ for extremal classes utilizes their topological nature in an essential way.

%


\bibliographystyle{plainnat}
\bibliography{learning}

\end{document}